\documentclass{article}
\usepackage{amsthm,authblk,fullpage,amssymb}

\usepackage[utf8]{inputenc}
\usepackage{tikz}
\usepackage{comment}
\usetikzlibrary{calc,decorations, decorations.pathreplacing,decorations.pathmorphing,arrows}
\usepackage{amssymb}
\usepackage{enumerate}
\usepackage{mathtools}

 \usepackage[hidelinks]{hyperref}
  \usepackage[capitalise]{cleveref}
\usepackage{breakurl}

 \theoremstyle{plain}
  \newtheorem{theorem}{Theorem}[section]
  \newtheorem{lemma}[theorem]{Lemma}  
  \newtheorem{observation}[theorem]{Observation}  
  \newtheorem{corollary}[theorem]{Corollary}  
  \newtheorem{fact}[theorem]{Fact}
  \crefname{figure}{Figure}{Figures}
  \theoremstyle{definition}
  \newtheorem{hypothesis}[theorem]{Hypothesis}
  
  \theoremstyle{remark}
  \newtheorem{remark}[theorem]{Remark}
  \newtheorem{example}[theorem]{Example}
  
\crefname{hypothesis}{Hypothesis}{Hypotheses}

\renewcommand{\P}{\mathcal{P}}
\newcommand{\ww}{\mathbf{w}}
\newcommand{\pp}{\mathbf{p}}
\newcommand{\qq}{\mathbf{q}}
\newcommand{\QQ}{\mathbf{Q}}

\newcommand{\NN}{\mathbf{N}}
\newcommand{\SQ}{\mathit{SQ}}
\renewcommand{\Alph}{\mathrm{Alph}}
\newcommand{\Fib}{\mathit{Fib}}
\newcommand{\lleft}{\psi}

\newcommand{\SQABEL}{\SQ_{\mathrm{Abel}}}
\newcommand{\SQPABEL}{\SQ'_{\mathrm{Abel}}}
\newcommand{\SQPARAM}{\SQ_{\mathrm{param}}}
\newcommand{\SQPPARAM}{\SQ'_{\mathrm{param}}}
\newcommand{\SQOP}{\SQ_{\mathrm{op}}}
\newcommand{\SQPOP}{\SQ'_{\mathrm{op}}}
\newcommand{\hasha}[2]{|#1|_{#2}}
\renewcommand{\L}{\mathcal{L}}
\DeclareMathOperator{\ind}{index}
\DeclareMathOperator{\id}{id}

\DeclarePairedDelimiter{\floor}{\lfloor}{\rfloor}

\title{Maximum Number of Distinct and Nonequivalent Nonstandard Squares in a Word}

\author{Tomasz Kociumaka\footnote{Supported by Polish budget funds for science in 2013-2017 as a research project under the 'Diamond Grant' program,
  grant no 0179/DIA/2013/42.}}
\author{Jakub Radoszewski\footnote{The author is a Newton Fellow at King's College London.} \footnote{Supported by the Polish Ministry of Science and Higher Education under the `Iuventus Plus' program in 2015-2016 grant no 0392/IP3/2015/73.}}
\author{Wojciech Rytter\footnote{Supported by the Polish National Science Center, grant no 2014/13/B/ST6/00770.}}
\author{Tomasz Waleń$^\ddagger$}
\affil{Institute of Informatics, University of Warsaw\\
    \texttt{[kociumaka,jrad,rytter,walen]@mimuw.edu.pl}}


\date{}

\begin{document}
\maketitle

\begin{abstract}
  The combinatorics of squares in a word depends on how the equivalence of halves of the square is defined.
  We consider Abelian squares, parameterized squares, and order-preserving squares.
  The word $uv$ is an Abelian (parameterized, order-preserving) square if $u$ and $v$ are
  equivalent in the Abelian (parameterized, order-preserving) sense.
  The maximum number of ordinary squares in a word is known to be
  asymptotically linear, but the exact bound is still investigated.
  We present several results on the maximum number of distinct squares
  for nonstandard subword equivalence relations.
  Let $\SQABEL(n,\sigma)$ and $\SQPABEL(n,\sigma)$ denote the maximum number of Abelian
  squares in a word of length $n$ over an alphabet of size $\sigma$,
  which are distinct as words and which are nonequivalent in the Abelian sense, respectively.
  For $\sigma\ge 2$ we prove that $\SQABEL(n,\sigma)=\Theta(n^2)$, $\SQPABEL(n,\sigma)=\Omega(n^{3/2})$
  and $\SQPABEL(n,\sigma) = O(n^{11/6})$.
  We also give linear bounds for parameterized and order-preserving squares
  for alphabets of constant size: $\SQPARAM(n,O(1))=\Theta(n)$, $\SQOP(n,O(1))=\Theta(n)$.
  The upper bounds have quadratic dependence on the alphabet size for order-preserving
  squares and exponential dependence for parameterized squares. 
  
  As a side result we construct infinite words over the smallest alphabet
  which avoid nontrivial order-preserving squares and nontrivial parameterized cubes
  (nontrivial parameterized squares cannot be avoided in an infinite word).

  A preliminary version of this paper was published at DLT 2014 [LNCS vol. 8633. Springer, pp. 216--226, 2014].
  In this full version we improve or extend the bounds on all three kinds of squares.
\end{abstract}

\section{Introduction}
  Repetitions in words are a fundamental topic in combinatorics on words \cite{Karhumaki}.
  They are widely used in many fields, such as pattern matching, automata theory,
  formal language theory, data compression, molecular biology, etc.
  Squares, that is, words of the form $uu$, are the basic and one of the most commonly studied types of repetitions.
  An example of an infinite square-free word over a ternary alphabet, given by Thue \cite{Thue},
  is considered to be the foundation of combinatorics on words.

  If we allow other equivalence relations on words, several generalizations of the notion of square
  can be obtained.
  One such generalization are Abelian squares, that is, words of the form $uv$ where
  the multisets of symbols of $u$ and $v$ are the same.
  Abelian squares were first studied by Erd\H{o}s \cite{Erdos}, who posed a question on
  the smallest alphabet size for which there exists an infinite Abelian-square-free word.
  The first example of such a word over a finite alphabet was given by Evdokimov~\cite{evdokimov}.
  Later the alphabet size was improved to five by Pleasants~\cite{Pleasants}
  and finally an optimal example over a four-letter alphabet was shown by
  Ker\"anen~\cite{DBLP:conf/icalp/Keranen92}.

  In this paper we consider Abelian squares and introduce squares based
  on two other known equivalence relations on words.
  The first is parameterized equivalence \cite{DBLP:journals/jcss/Baker96}, in which two words
  $u$, $v$ of length $n$ over alphabets $\Alph(u)$ and $\Alph(v)$ are considered equal
  if one can find a bijection $f : \Alph(u) \to \Alph(v)$
  such that $v[i]=f(u[i])$ for all $i=1,\ldots,n$.
  The second model, order-preserving equivalence \cite{DBLP:journals/ipl/KubicaKRRW13,DBLP:journals/tcs/KimEFHIPPT14},
  assumes that the alphabets are ordered.
  Two words $u$, $v$ of the same length are considered equivalent in this model if they are
  equal in the parameterized sense with $f$ being a strictly increasing bijection.
  We define a parameterized square and an order-preserving square as a concatenation
  of two words that are equivalent in the parameterized and in the order-preserving sense, respectively.
  Another recently studied model, which we do not consider in our work, however,
  is $k$-Abelian equivalence \cite{DBLP:journals/tcs/HuovaKS12}.
  It lies in between standard equality and Abelian equivalence.
  The nonstandard types of squares can be viewed as a part of nonstandard stringology;
  see \cite{DBLP:conf/cpm/Muthukrishnan95,Muthukrishnan:1994:NSA:195058.195457}.
  Algorithms for computing Abelian squares and order-preserving squares were recently presented
  in \cite{MACIS2015} and \cite{Crochemore2015}, respectively.

  \begin{example}
    Consider the alphabet $\Sigma=\{1,2,3,4\}$ with the natural order.
    Then $1213\,1213$ is a square,
    $1213\,3112$ is an Abelian square,
    $1213\,4142$ is a parameterized square,
    and $1213\,1314$ is an order-preserving square over $\Sigma$.
  \end{example}

  An important combinatorial fact about ordinary squares is that the maximum number
  of distinct squares in a word of length $n$ is linear in terms of $n$.
  Actually this number has recently been proved to be at most $\frac{11}{6}n$~\cite{DBLP:journals/dam/DezaFT15},
  improving upon an earlier bound of $2n-\Theta(\log n)$
  \cite{fraenkel-simpson,DBLP:journals/jct/Ilie05,DBLP:journals/tcs/Ilie07}.
  This bound has found applications in several text algorithms
  \cite{DBLP:journals/tcs/CrochemoreIR09} including two different linear-time algorithms
  computing all distinct squares \cite{DBLP:journals/jcss/GusfieldS04,Extracting_TCS}.
  A recent result shows that the maximum number of distinct squares in a labeled tree is
  asymptotically $\Theta(n^{4/3})$ \cite{DBLP:conf/cpm/CrochemoreIKKRRTW12}.
  Also some facts about counting distinct squares in partial words are known
  \cite{DBLP:conf/dlt/Blanchet-SadriJM12,DBLP:journals/actaC/Blanchet-SadriMS09}.
  In this paper we attempt the same type of combinatorial analysis for nonstandard squares.
  In turns out that the results that we obtain depend heavily on which squares
  we consider distinct.

  Let $\SQABEL(n,\sigma)$, $\SQPARAM(n,\sigma)$, and $\SQOP(n,\sigma)$ denote respectively the maximum number of
  Abelian, parameterized, and order-preserving squares in a word of length $n$ over
  an alphabet of size $\sigma$ which are \emph{distinct} as words.
  Moreover, let $\SQPABEL(n,\sigma)$, $\SQPPARAM(n,\sigma)$, and $\SQPOP(n,\sigma)$ denote the maximum number of
  Abelian, parameterized, and order-preserving squares in a word of length $n$ over
  an alphabet of size $\sigma$ which are \emph{nonequivalent} in the Abelian, parameterized,  and order-preserving sense, respectively.
  We also use analogous notation, e.g., $\SQABEL(w)$, $\SQPABEL(w)$, for an arbitrary word $w$.

  \begin{example}
    Consider a Fibonacci word\footnote{%
      Fibonacci words are defined as: $\Fib_0=0$, $\Fib_1=01$, $\Fib_k = \Fib_{k-1} \Fib_{k-2}$ for $k \ge 2$.
    }  $\Fib_5=0100101001001$.
    It contains 5 Abelian squares of length 6:
    \[010\,010,\ 001\,010,\ 010\,100,\ 100\,100,\ \mbox{and}\ 001\,001,\]
    which are all distinct as words but are Abelian-equivalent.
    In total, $\Fib_5$ contains 13 distinct subwords which are Abelian squares.
    Hence, $\SQABEL(\Fib_5)=13$.
    On the other hand, $\Fib_5$ contains only 5 Abelian-nonequivalent squares, with sample representatives:
    \[0\,0,\ 01\,01,\ 001\,010,\ 10010\,10010,\ \mbox{and}\ 010010\,100100.\]
    Hence, $\SQPABEL(\Fib_5)=5$.
    The value $\SQ'$ is usually much smaller than $\SQ$, e.g., for $\Fib_{14}$ of length $987$,
    $\SQPABEL(\Fib_{14})=490$ and
    $\SQABEL(\Fib_{14})=57796$.
    In general, one can show that $\SQPABEL(\Fib_k)=O(|\Fib_k|)$.
    Abelian repetitions in Fibonacci words and Sturmian words were already studied in \cite{DBLP:conf/dlt/FiciLLLMP13}.
  \end{example}

  \subsection*{Our main results}
  \begin{enumerate}[1.]
    \item $\SQPABEL(n,\sigma)=\Omega(n^{3/2})$ and $\SQPABEL(n,\sigma) \le n^{11/6}$ for $\sigma \ge 2$;
    \item $\SQOP(n,\sigma)\le (\binom{\sigma}{2}+\frac{11}{6})n$ and therefore $\SQPOP(n,\sigma)=(\binom{\sigma}{2}+\frac{11}{6})n$;
    \item $\SQPARAM(n,\sigma)\le 2(\sigma!)^2n$ 
      and $\SQPPARAM(n,\sigma)\le 2\sigma!n$.
  \end{enumerate}

  \subsection*{Structure of the paper}
  The two following sections are devoted to bounds on Abelian squares.
  In \cref{sec:lower} we first show a simple example which implies $\SQABEL(n,\sigma)=\Theta(n^2)$ for $\sigma \ge 2$.
  Next we construct a family of binary words that gives the lower bound $\SQPABEL(n,\sigma) = \Omega(n^{1.5})$ for $\sigma \ge 2$.
  In \cref{sec:upper} we present upper bounds related to $\SQPABEL(n,\sigma)$:
  we use a result from additive combinatorics to derive a general $n^{11/6}$ upper bound,
  and we prove an upper bound of $O(nm)$ which holds in the case that the number of blocks
  of the same letter (i.e., the size of the run-length encoding of the word) is bounded by $m$.  

  In the next two sections upper bounds for the number of order-preserving and parameterized squares,
  respectively, are presented in the case of a small alphabet.

  The final \cref{sec:infinite} can be viewed as an extension of the
  works of Thue \cite{Thue}, Evdokimov~\cite{evdokimov}, Pleasants~\cite{Pleasants}, and Ker\"anen~\cite{DBLP:conf/icalp/Keranen92}
  on infinite square-free and Abelian-square-free words into the parameterized and order-preserving equivalence.
  As no square-free word of length larger than 1 exists for these two models of equivalence,
  we consider words avoiding \emph{nontrivial} squares, of length larger than 2.
  We present an infinite word over the minimum-size (ternary) alphabet
  avoiding nontrivial order-preserving squares.
  We also prove that there is no infinite word avoiding nontrivial parameterized squares,
  but there is one avoiding nontrivial parameterized cubes, that is,
  parameterized cubes of length greater than 3.

  \subsection*{Preliminary notions}

  By $\Sigma^*$ we denote the set of finite words over the alphabet $\Sigma$ and by
  $\Sigma^n$ we denote the subset of $\Sigma^*$ containing words of length $n$.
  For a word $w=w[1]\cdots w[n]$ we denote $|w|=n$ and $\Alph(w)$ as the set of letters present in $w$.
  A \emph{subword} of $w$ is a word of the form $u=w[i] \cdots w[j]$ for $1\le i\le j \le |w|$.
  By $w[i..j]$ we denote the \emph{occurrence} of $u$ at position $i$, called a \emph{fragment}
  of $w$. 
  A fragment is said to be \emph{uniform} if all its letters are equal.
  A \emph{block} (also known as a \emph{run}) in a word is a maximal uniform fragment, i.e.,
  a uniform fragment which cannot be extended neither to the left nor to the right.
  
  For a word $w$ and a letter $c$ we denote the number of occurrences of $c$ in $w$ by $\hasha{w}{c}$.
  The \emph{Parikh vector} of a word $w$ over an ordered alphabet $\Sigma=\{0,\ldots,\sigma-1\}$
  is defined as $\P(w)=(\hasha{w}{0},\ldots, \hasha{w}{\sigma-1}).$
  Note that Parikh vectors belong to $\mathbb{Z}_{\ge 0}^\sigma$.

  \section{Lower Bounds for Abelian Squares}\label{sec:lower}
  Let us start with a simple bound for $\SQABEL(n,\sigma)$.
  A different proof of the following fact was given independently by Fici~\cite{Fici}.

  \begin{fact}{\label{fct:sq-ab-lower-bound}}
    $\SQABEL(n,\sigma)=\Theta(n^2)$ for $\sigma \ge 2$.
  \end{fact}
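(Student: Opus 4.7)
The plan is to prove the two bounds separately. The upper bound $\SQABEL(n,\sigma)=O(n^2)$ is immediate: a word of length $n$ admits at most $\binom{n+1}{2}+1=O(n^2)$ distinct subwords, since each subword is determined by the pair of positions delimiting its occurrence, so in particular it contains $O(n^2)$ distinct Abelian squares.

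For the matching lower bound, I will exhibit a single binary word of length $\Theta(n)$ containing $\Omega(n^2)$ distinct Abelian squares; this suffices for all $\sigma\ge 2$, since a binary word is also a word over any larger alphabet. My candidate is built around a rigid ``spine'' of two isolated $1$'s separated by a long block of $0$'s: setting $k=\lfloor (n-2)/3\rfloor$, I take $w = 0^k\, 1\, 0^k\, 1\, 0^k$ and consider the family of subwords $u_{a,b}:=0^a\, 1\, 0^k\, 1\, 0^b$ for $0\le a,b\le k$. Because the middle run of length exactly $k$ pins down the two $1$'s inside $u_{a,b}$, the values $a$ and $b$ are recoverable from $u_{a,b}$ alone, so these words are pairwise distinct.

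The heart of the argument is to show that $u_{a,b}$ is an Abelian square whenever $a+b+k$ is even. Under this parity assumption $|u_{a,b}|=a+b+k+2$ is even and the Parikh vector equals $(a+b+k,2)$, so a balanced split requires each half to contain exactly one $1$ and $(a+b+k)/2$ zeros. A direct calculation using the bound $|a-b|\le k$ shows that the midpoint of $u_{a,b}$ lies between the two $1$'s (possibly coinciding with one of them), so each half inherits exactly one $1$, and the zero counts then match automatically. Since the number of pairs $(a,b)\in\{0,\dots,k\}^2$ with $a+b+k$ even is $\Omega(k^2)=\Omega(n^2)$, the lower bound follows, and combining the two directions yields $\SQABEL(n,\sigma)=\Theta(n^2)$.

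No step constitutes a real obstacle; the only mildly delicate point is the inequality guaranteeing that the balanced split separates the two $1$'s, which follows immediately from the uniform size $k$ of the middle $0$-run. The main design decision is picking a word structured enough that the Parikh vector of every relevant subword can be computed in closed form, yet rich enough to have $\Theta(n^2)$ distinct subwords in the first place.
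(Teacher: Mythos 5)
Your proposal is correct and follows essentially the same route as the paper: the paper uses the word $0^k 1 0^k 1 0^{2k}$ and counts the Abelian squares $0^a 1 0^b\cdot 0^{k-b} 1 0^{a+2b-k}$, which is exactly your family $0^a 1 0^k 1 0^{b'}$ under the substitution $b'=a+2b-k$ (with the parity condition absorbed into the parameterization). The only cosmetic differences are that you state the trivial $O(n^2)$ upper bound explicitly and should, as the paper does, pad the word with a few extra letters to reach length exactly $n$.
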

  \begin{proof}
    Consider the word $u_k=0^k 1 0^k 1 0^{2k}$ of length $4k+2$.
    It contains $\Theta(k^2)$ Abelian squares of the form
    $0^a 10^b\ 0^{k-b} 1 0^{a+2b-k}$ for all $a,b\in \mathbb{Z}_{\ge 0}$ such that $a,b\le k$ and $a+2b\ge k.$
    Thus we obtain $\SQABEL(n,2)=\Theta(n^2)$ for $n=4k+2$.
    If $n \bmod 4 \ne 2$, we pick the longest word $u_k$ such that $|u_k| \le n$
    and extend it with $n-|u_k| \le 3$ zeros.
  \end{proof}

  In the preliminary version~\cite{DBLP:conf/dlt/KociumakaRRW14} we showed that $\SQPABEL(n,2) = \Omega(n^{3/2}/\log n)$.
  The family of words used in that construction was $01 0^21^2 \ldots 0^k1^k$.
  Here, we prove that $\SQPABEL(n,2) = \Omega(n^{3/2})$.
  Our lower-bound family of words is
  \[\ww_k = (0^{k}1^{k})^{3k} (0^{k+1}1^{k+1})^{k}.\]
  
  We say that $(r,\ell)$ is \emph{a square vector} in $w$
  if there exists an Abelian square $u_1u_2$ in $w$ such that
  $\P(u_1)=\P(u_2)=(r,\ell)$.
  Now, $\SQPABEL(n,2)$ can be expressed as the maximum number of different square vectors in a binary word of length~$n$.

  In our construction \emph{balanced} Abelian squares
  and \emph{balanced} square vectors play a crucial role.
  A vector $(r,\ell)$ is called balanced if $r=\ell$, and a word $w$ is called
  balanced if its Parikh vector is balanced.
  Abelian squares and square vectors are called unbalanced if they are not balanced.

  In what follows, we identify $\Theta(k^2)$ distinct balanced Abelian squares in $\ww_k$ and extend them 
  obtaining further $\Theta(k)$ (unbalanced) Abelian squares for each balanced square vector $(\ell,\ell)$.
  Here, $\ell$ shall be an arbitrary element of the following set:
  \[
    \NN_k = \{ik+j(k+1)\,:\, 0 \le i \le k,\, 0 \le j \le k-1\}.
  \]

  \begin{example}
    $\NN_3 = \{0,4,8,\,3,7,11,\,6,10,14,\,9,13,17\}$.
  \end{example}
  
  \begin{observation}\label{obs:S}
    $|\NN_k| = k(k+1)$.
  \end{observation}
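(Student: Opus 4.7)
The set $\NN_k$ is the image of the map $\varphi \colon \{0,\ldots,k\}\times\{0,\ldots,k-1\} \to \mathbb{Z}_{\ge 0}$ given by $\varphi(i,j)=ik+j(k+1)$. The domain has size $(k+1)\cdot k = k(k+1)$, so the claim $|\NN_k|=k(k+1)$ is equivalent to the injectivity of $\varphi$. This is what I would prove.

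The plan is to suppose $\varphi(i,j)=\varphi(i',j')$ for two pairs in the domain and derive $(i,j)=(i',j')$. Rearranging gives
\[
(i-i')\,k = (j'-j)\,(k+1).
\]
Since $\gcd(k,k+1)=1$, the integer $k$ must divide $j'-j$. The admissible range $0 \le j,j' \le k-1$ forces $|j'-j|\le k-1<k$, hence $j=j'$. Substituting back yields $(i-i')k=0$, so $i=i'$ as well. This establishes injectivity and therefore the desired equality $|\NN_k|=k(k+1)$.

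There is essentially no obstacle here; the only thing to check carefully is the range bound $|j'-j|\le k-1$, which is where the choice $0 \le j \le k-1$ (rather than $0 \le j \le k$) matters. The asymmetry between the ranges of $i$ and $j$ in the definition of $\NN_k$ is precisely what makes the representation unique, analogous to the Chicken McNugget / Frobenius style uniqueness of representations with coprime moduli.
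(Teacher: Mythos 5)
Your proof is correct and follows essentially the same route as the paper: both arguments establish injectivity of $(i,j)\mapsto ik+j(k+1)$ on $\{0,\ldots,k\}\times\{0,\ldots,k-1\}$ using the coprimality of $k$ and $k+1$ together with the range constraints. The paper phrases this by reducing modulo $k$ and modulo $k+1$ separately, whereas you rearrange to $(i-i')k=(j'-j)(k+1)$ and invoke divisibility, but this is only a cosmetic difference.
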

  \begin{proof}
    Let $i,i' \in \{0,\ldots,k\}$ and $j,j' \in \{0,\ldots,k-1\}$.
    Suppose that \[ik+j(k+1) = \ell = i'k+j'(k+1).\]
    We then have $j \equiv \ell \equiv j' \pmod{k}$ and
    $-i \equiv \ell \equiv -i' \pmod{k+1}$, i.e., $i=i'$ and $j=j'$.
    Consequently, $|\NN_k|=k(k+1)$
    as claimed. 
  \end{proof}

  \noindent
  Denote
  \[
    \qq_{i,j} = 1^j(0^k1^k)^{2i+j}(0^{k+1}1^{k+1})^j0^j
  \]
  and $\QQ_k = \{\qq_{i,j}\,:\, 0 \le i \le k,\, 0 \le j \le k-1\}$.
  
  \begin{lemma}\label{lem:balanced}
  Every $\qq_{i,j}\in \QQ_k$ is a balanced Abelian square corresponding to square vector $(\ell,\ell)$ with $\ell=ik+j(k+1)$.
  Moreover, $\qq_{i,j}$ occurs in $\ww_k$ at position $6k^2-4ik - (2k+1)j+1$.
  \end{lemma}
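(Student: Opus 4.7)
The plan is to verify both claims by direct computation with the block structure of $\qq_{i,j}$ and $\ww_k$. Since
\[|\qq_{i,j}| = j + 2k(2i+j) + 2(k+1)j + j = 4(ik+j(k+1)) = 4\ell,\]
and a count of symbols gives total Parikh vector $(2\ell,2\ell)$, for the first claim it suffices to show that the length-$2\ell$ prefix of $\qq_{i,j}$ contains exactly $\ell$ zeros; the complementary half's Parikh vector then follows automatically.

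To locate position $2\ell$ inside $\qq_{i,j}$ I would split on whether $i \ge 1$ or $i = 0$. When $i \ge 1$, the chain $2ik \ge 2k > k-1 \ge j$ yields $2\ell < j+2k(2i+j)$, placing position $2\ell$ strictly inside the region $(0^k1^k)^{2i+j}$; computing $2\ell-j = 2ik+2kj+j$ and dividing by $2k$ (using $j < 2k$) shows this offset lies exactly $j$ characters into the $(i+j+1)$-st copy of $0^k1^k$, hence inside its $0^k$-block. When $i=0$, position $2\ell = 2(k+1)j$ lies $j$ characters into the first $0^{k+1}$-block of $(0^{k+1}1^{k+1})^j$. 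In either case the length-$2\ell$ prefix has the form $1^j(0^k1^k)^{i+j}0^j$, containing $j+(i+j)k = \ell$ ones and $(i+j)k+j = \ell$ zeros, so $\qq_{i,j}$ is indeed a balanced Abelian square with square vector $(\ell,\ell)$.

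For the occurrence claim I would set $p = 6k^2-4ik-(2k+1)j+1$ and $m = 3k-2i-j$, and observe the key identity $p-1+j = 2km$. Since the assumptions $0\le i\le k$ and $0\le j\le k-1$ give $1 \le m \le 3k$, position $p$ sits inside the prefix $(0^k1^k)^{3k}$ of $\ww_k$ so that positions $p, p+1, \ldots, p+j-1$ coincide with the last $j$ characters of the $m$-th $1^k$-block of $\ww_k$---matching the initial $1^j$ of $\qq_{i,j}$. The subsequent factor $(0^k1^k)^{2i+j}$ of $\qq_{i,j}$ then precisely consumes the remaining $3k-m = 2i+j$ copies of $0^k1^k$ in the first region of $\ww_k$, placing us at position $6k^2+1$---the start of the second region $(0^{k+1}1^{k+1})^k$. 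Using $j \le k-1 < k$, the trailing $(0^{k+1}1^{k+1})^j0^j$ of $\qq_{i,j}$ then aligns with the first $j$ copies of $0^{k+1}1^{k+1}$ followed by the opening $0^j$ of the next copy, as required.

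The main bookkeeping obstacle is juggling three different block lengths ($j$, $k$, and $k+1$) simultaneously and verifying the range bounds cleanly; once the alignment identity $p-1+j=2km$ is spotted, the remainder is a routine matching of block decompositions.
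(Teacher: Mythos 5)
Your proof is correct and follows essentially the same route as the paper: both identify the first half of $\qq_{i,j}$ as $1^j(0^k1^k)^{i+j}0^j$, count letters to get the square vector $(\ell,\ell)$, and locate the occurrence by aligning $1^j(0^k1^k)^{2i+j}$ with a suffix of $(0^k1^k)^{3k}$ and $(0^{k+1}1^{k+1})^j0^j$ with a prefix of $(0^{k+1}1^{k+1})^k$. Your version merely spells out more carefully (via the case split on $i\ge 1$ versus $i=0$, and the identity $p-1+j=2km$) the block-boundary checks that the paper leaves implicit.
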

  \begin{proof}
  First, note that 
  \[|\qq_{i,j}|_0=(2i+j)k+j(k+1)+j=2ik+2j(k+1)=2\ell\]
  and similarly $|\qq_{i,j}|_1=j+(2i+j)k+j(k+1)=2\ell$. Also,
  observe that $\qq_{i,j}$ has a prefix
  \[\pp_{i,j} = 1^j (0^k1^k)^{i+j}0^j\] such
  that $|\pp_{i,j}|_0 = (i+j)k+j = ik+j(k+1)=\ell$ and $|\pp_{i,j}|_1 = j+(i+j)k=\ell$. Thus,
  $\qq_{i,j}$ is indeed a balanced Abelian square with square vector $(\ell,\ell)$.
  
  Finally, observe that $1^j(0^k1^k)^{2i+j}$ is a suffix of $(0^k1^k)^{3k}$ (since $2i+j\le 2k+k-1<3k$)
  and $(0^{k+1}1^{k+1})^j 0^j$ is a prefix of $(0^{k+1}1^{k+1})^{k}$ (since $j<k$).
  Consequently, $\qq_{i,j}$ occurs in $\ww_k$ at position
  \[1+|(0^k1^k)^{3k}|-|1^j(0^k1^k)^{2i+j}|=6k^2-4ik-(2k+1)j+1,\]
  as claimed.
  \end{proof}

  \begin{figure}[b]
    \centering
    \begin{tikzpicture}[xscale=0.3]
  \foreach \x in {-11,-7,-3,1,5,9}{
    \begin{scope}[xshift=\x cm]
    \foreach \y in {0,1} \draw[xshift=\y cm] node {\footnotesize 0};
    \foreach \y in {0,1} \draw[xshift=2cm,xshift=\y cm] node {\footnotesize 1};
    \end{scope}
  }
  \begin{scope}[xshift=12.5cm]
  \foreach \x in {1,7}{
    \begin{scope}[xshift=\x cm]
    \foreach \y in {0,1,2} \draw[xshift=\y cm] node {\footnotesize 0};
    \foreach \y in {0,1,2} \draw[xshift=3cm,xshift=\y cm] node {\footnotesize 1};
    \end{scope}
  }
  \end{scope}
  
  \draw[very thick] (5.5,0.2) -- (5.5,0.4) node[above] {\footnotesize $\qq_{2,1}$} -- (20,0.4) -- (20,0.2);
  \draw[very thick] (-8.5,0.2) -- (-8.5,0.4) -- (5.5,0.4);

  \draw (6.5,-0.2) -- (6.5,-0.4) -- (22,-0.4) -- (22,-0.2);
  \draw (-8.5,-0.2) -- (-8.5,-0.4) -- (6.5,-0.4);
\end{tikzpicture}
    \caption{
      Consider the word $\ww_2$ and its subword $\qq_{2,1}$.
      This subword corresponds to a balanced Abelian square with square vector
      $\frac{1}{2}\P(\qq_{2,1}) = (7,7)$ (in bold).
      The first half of the Abelian square is followed with $0$ and the second half with $0^2$.
      Hence, if we extend each half by one position, we obtain an unbalanced Abelian square
      with square vector $(8,7)$.
    }\label{fig:lb}
  \end{figure}
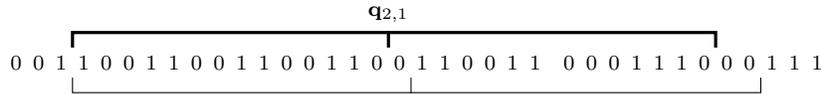

  An illustration of the proof of \cref{lem:balanced} is shown in \cref{fig:lb}.
  This figure also provides some intuition on how to obtain unbalanced Abelian squares
  from the balanced Abelian squares that we identified using this lemma.
  
  \begin{lemma}\label{lem:unbalanced}
  For each $\ell\in \NN_k$ the word $\ww_k$ contains at least $\frac{k+1}{2}$
  square vectors of the form $(r,\ell)$ for some integer $r$.
  \end{lemma}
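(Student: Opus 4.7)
Given $\ell\in\NN_k$, let $(i,j)\in\{0,\ldots,k\}\times\{0,\ldots,k-1\}$ be the unique pair with $ik+j(k+1)=\ell$ (uniqueness comes from \cref{obs:S}), and let $\qq_{i,j}$ at position $p$ in $\ww_k$ be the balanced Abelian square provided by \cref{lem:balanced}. The plan is to build, around this fragment, a one-parameter family of Abelian squares of the form $\ww_k[p..p+4\ell+2t-1]$ indexed by an integer $t$ in a symmetric interval about $0$, such that each member has square vector $(\ell+t,\ell)$. For $t\ge 0$ the fragment extends each half of $\qq_{i,j}$ by $t$ positions to the right, whereas for $t=-s<0$ it shifts the midpoint of $\qq_{i,j}$ left by $s$ and trims $2s$ characters from the end. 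Different $t$'s produce distinct first coordinates, so the count of square vectors with second coordinate $\ell$ is at least the size of the interval.

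For $t\ge 0$, elementary Parikh-vector bookkeeping reduces Abelian-equality of the two new halves to
\[
\P(\ww_k[p+4\ell\,..\,p+4\ell+2t-1]) \;=\; 2\,\P(\ww_k[p+2\ell\,..\,p+2\ell+t-1]).
\]
The factorization of $\qq_{i,j}$ from \cref{lem:balanced} shows that its second half begins with a block of at least $k-j$ zeros, and inspecting where $\qq_{i,j}$ ends (inside the $(0^{k+1}1^{k+1})^k$-part of $\ww_k$) shows that $\qq_{i,j}$ is immediately followed by exactly $k+1-j$ zeros. Hence both sides above equal $(2t,0)$ for every $1\le t\le\lfloor (k+1-j)/2\rfloor$, yielding Abelian squares with square vectors $(\ell+t,\ell)$.

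For $t=-s<0$, the symmetric computation reduces Abelian-equality to
\[
\P(\ww_k[p+4\ell-2s\,..\,p+4\ell-1]) \;=\; 2\,\P(\ww_k[p+2\ell-s\,..\,p+2\ell-1]).
\]
Since both $\qq_{i,j}$ and its first half $\pp_{i,j}$ end with $0^j$ (again by \cref{lem:balanced}), this holds for every $1\le s\le\lfloor j/2\rfloor$, producing Abelian squares with square vectors $(\ell-s,\ell)$.

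Together with the balanced case $t=0$ this gives $1+\lfloor j/2\rfloor+\lfloor (k+1-j)/2\rfloor$ pairwise distinct square vectors of the required form. Finally, the elementary inequality $\lfloor x/2\rfloor+\lfloor y/2\rfloor\ge (x+y-2)/2$ applied at $x=j$, $y=k+1-j$ shows this count is at least $1+(k-1)/2=(k+1)/2$, proving the lemma. The only obstacle is the case analysis inside the two Parikh identities above; once the block structure of $\ww_k$ and the explicit factorizations of $\qq_{i,j}$ and $\pp_{i,j}$ (all provided by \cref{lem:balanced}) are on the table, these are a direct reading off the word.
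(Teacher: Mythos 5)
Your proof is correct and follows essentially the same route as the paper's: locate the balanced square $\qq_{i,j}$ via \cref{lem:balanced}, then extend or trim both halves using the runs of zeros following (respectively ending) $\qq_{i,j}$ and its first half, obtaining $1+\floor{j/2}+\floor{(k+1-j)/2}\ge\frac{k+1}{2}$ distinct square vectors $(r,\ell)$. The only cosmetic difference is that you parameterize by $t=r-\ell$ and state the Parikh-vector identities as explicit reduction conditions, whereas the paper writes the prefix Parikh vectors directly; the content is identical.
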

  \begin{proof}
  By \cref{lem:balanced}, there exists $\qq_{i,j}\in \QQ_k$ whose square vector is $(\ell,\ell)$.
  Moreover, it occurs in $\ww_k$ at position $a := 6k^2-4ik-(2k+1)j+1$.
  
  We shall prove that for each $r\in \{\ell-\floor{\frac{j}{2}},\ldots,\ell+\floor{\frac{k+1-j}{2}}\}$, 
  there is an Abelian square with square vector $(r,\ell)$ occurring in $\ww_k$ at position $a$; see also \cref{fig:unbalanced}.
  
  Note that $\ww_k[a..a+4\ell-1]=\qq_{i,j}$ ends with $0^j$ and is followed by $0^{k+1-j}$,
  while its first half ends with $0^j$ (with $0^{\floor{\frac{j}{2}}}$ in particular) and is followed by $0^{k-j}$
  (by $0^{\floor{\frac{k+1-j}{2}}}$ in particular, because $j\le k-1$). 
  Consequently, we have
  \[\P(\ww_k[a..a+\ell+r-1])=\P(\ww_k[a..a+2\ell-1])+(r-\ell,0)=(r,\ell)\]
  and
  \[\P(\ww_k[a..a+2(\ell+r)-1])=\P(\ww_k[a..a+4\ell-1])+(2(r-\ell),0)=(2r,2\ell).\]
  Therefore, there are at least $\floor{\frac{j}{2}}+\floor{\frac{k+1-j}{2}}+1\ge \frac{k+1}{2}$
  square vectors of the claimed form in $\ww_k$.  
  \end{proof}

  \begin{figure}[t]
    \centering
    \def\blockH{0.6}
\newcommand{\drawBlock}[4]{
  \draw[fill=white] (#1,0)--(#1/2+#2/2,0)--(#1/2+#2/2,\blockH)--(#1,\blockH)--cycle;
  \draw[fill=white!70!black] (#1/2+#2/2,0)--(#2,0)--(#2,\blockH)--(#1/2+#2/2,\blockH)--cycle;
  \node at (#1+#2/4-#1/4,\blockH) [above] {\small #3};
  \node at (#1/2+#2/2+#2/4-#1/4,\blockH) [above] {\small #4};
}
\begin{tikzpicture}[scale=0.295]

\drawBlock{-9}{-6}{}{}
\drawBlock{-6}{-3}{}{}
\drawBlock{-3}{0}{}{}
\drawBlock{4}{7}{}{}
\drawBlock{7}{10}{}{}
\drawBlock{10}{13}{}{}
\drawBlock{15}{18}{}{}
\drawBlock{18}{22}{}{}
\drawBlock{24}{28}{}{}
\drawBlock{28}{32}{}{}

\draw (2,0.3) node {\footnotesize \dots};
\draw (14,0.3) node {\footnotesize \dots};
\draw (23,0.3) node {\footnotesize \dots};

\draw[densely dotted] (18,-1.5) -- (18,1.5);

\draw[yshift=0.2cm,very thick,xshift=1cm] (10,0.8) -- (10,1.2) -- node[above=-2]{\scriptsize{$w[a+2\ell..a+4\ell-1]$}} (28,1.2) -- (28,0.8);
\draw[yshift=0.2cm,very thick,xshift=-17cm] (10,0.8) -- (10,1.2) -- node[above=-2]{\scriptsize{$w[a..a+2\ell-1]$}} (28,1.2);

\draw[xshift=1cm] (-8,-0.3) -- (-8,-0.7) -- (9.5,-0.7);
\draw[xshift=18.5cm] (-8,-0.3) -- (-8,-0.7) -- (9.5,-0.7) -- (9.5,-0.3);

\
\draw[yshift=-0.8cm,xshift=1cm] (-8,-0.3) -- (-8,-0.7) -- node[below=-2]{\scriptsize{$w[a..a+\ell+r-1]$}} (10.5,-0.7);
\draw[yshift=-0.8cm,xshift=19.5cm] (-8,-0.3) -- (-8,-0.7) -- node[below=-2]{\scriptsize{$w[a+\ell+r..a+2(\ell+r)-1]$}}  (10.5,-0.7) -- (10.5,-0.3);

\end{tikzpicture}
    \caption{\label{fig:unbalanced}
      A schematic illustration of the proof of \cref{lem:unbalanced}.
      Light rectangles represent zeroes and dark rectangles represent ones.
    }
  \end{figure}

  \begin{theorem}
    $\SQPABEL(n,\sigma)= \Omega(n^{3/2})$ for each $\sigma \ge 2$.
  \end{theorem}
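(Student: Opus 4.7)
The plan is to invoke \cref{lem:unbalanced} for every $\ell \in \NN_k$ and sum the resulting counts. Since the second coordinate of a square vector $(r,\ell)$ uniquely determines $\ell$, the families of square vectors produced by \cref{lem:unbalanced} for distinct values of $\ell$ are pairwise disjoint. Combining this with \cref{obs:S} I expect to obtain
\[
  \SQPABEL(\ww_k) \;\ge\; |\NN_k| \cdot \frac{k+1}{2} \;=\; k(k+1)\cdot\frac{k+1}{2} \;=\; \Omega(k^3).
\]

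Next I would translate this into a bound in $n$. A direct count gives $|\ww_k| = 3k\cdot 2k + k\cdot 2(k+1) = 8k^2+2k$, so $k = \Theta(\sqrt{n})$ and therefore $\Omega(k^3) = \Omega(n^{3/2})$. For lengths $n$ that do not exactly match $|\ww_k|$ for some $k$, I would pick the largest $k$ with $|\ww_k| \le n$ and pad the word with $n - |\ww_k| < |\ww_{k+1}| - |\ww_k| = O(k)$ extra zeros at the end; this preserves all square vectors already present in $\ww_k$, so the asymptotic bound is unaffected.

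To extend the bound from $\sigma = 2$ to arbitrary $\sigma \ge 2$, I would reuse the binary construction over the two-letter subalphabet $\{0,1\} \subseteq \Sigma$. An Abelian square over $\{0,1\}$ remains an Abelian square viewed over the larger alphabet $\Sigma$, and distinct binary square vectors stay distinct when extended by zero coordinates for the additional letters. Hence $\SQPABEL(n,\sigma) \ge \SQPABEL(n,2) = \Omega(n^{3/2})$.

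All of the combinatorial content has already been packed into \cref{lem:balanced} and \cref{lem:unbalanced}; the proof of the theorem itself is just a bookkeeping step combining the per-$\ell$ multiplicity, the cardinality bound $|\NN_k| = k(k+1)$, and the translation $k = \Theta(\sqrt{n})$. Consequently I do not anticipate any real obstacle here — the only mild care point is making sure that different values of $\ell \in \NN_k$ really do yield disjoint sets of square vectors, which is immediate from the definition.
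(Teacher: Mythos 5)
Your proposal is correct and follows essentially the same route as the paper's proof: sum the $\frac{k+1}{2}$ square vectors from \cref{lem:unbalanced} over the $k(k+1)$ values $\ell\in\NN_k$ (distinct $\ell$ giving disjoint families), use $|\ww_k|=8k^2+2k$ to convert $\Omega(k^3)$ into $\Omega(n^{3/2})$, and pad for intermediate lengths. The only cosmetic difference is that you pad with zeros while the paper appends ones; either choice preserves all existing square vectors.
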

  \begin{proof}
    We constructed a family of binary words $\ww_k$ together with the sets $\NN_k$.
    Note that 
    \[|\ww_k| = 8k^2+2k=O(k^2)\]
    and, by \cref{obs:S},
    \[|\NN_k|=k(k+1)=\Omega(k^2).\]
    By \cref{lem:unbalanced}, the number of distinct square vectors in $\ww_k$ is at least
    \[|\SQPABEL(\ww_k)|\ge \tfrac{k+1}{2}|\NN_k| = \Omega(k^3)=\Omega(|\ww_k|^{3/2}).\]
    This completes the lower bound proof for $n=|\ww_k|$.
    For other lengths $n$ we pick the longest word $\ww_k$ such that $|\ww_k| \le n$
    and append it with ones.
  \end{proof}

  \section{Upper Bounds for Abelian Squares}\label{sec:upper}
  Let us start with an upper bound of $n^{11/6}$ on the number
  of nonequivalent Abelian squares using the following result from additive combinatorics.
  Recall that an Abelian group $(Z,+)$ is called \emph{torsion-free} if $nz=0$ for $n\in \mathbb{Z}$ and $z\in Z$ implies $n=0$ or $z=0$.
  \begin{lemma}[Katz \& Tao \cite{katz1999bounds}]\label{lem:tao}
  Let $(Z,+)$ be a torsion-free Abelian group, let $A,B$ be subsets of $Z$, and let $G \subseteq A\times B$.
  If \[\max(|A|,|B|,|\{a+b : (a,b)\in G\}|)\le N,\] then
  \[|\{a-b : (a,b)\in G\}| \le N^{11/6}.\]
  \end{lemma}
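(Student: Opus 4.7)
The lemma is a result of Katz and Tao from additive combinatorics, and my plan would be to follow their general paradigm for partial-sumset inequalities: dyadic pigeonhole to clean up the graph $G$, a Balog-Szemerédi-Gowers (BSG) style refinement to pass from partial to full sumsets, and classical Plünnecke-Ruzsa estimates in the torsion-free group $Z$ to convert sums into differences. Writing $r(s) = |\{(a,b)\in G : a+b = s\}|$ and $d(t) = |\{(a,b)\in G : a-b = t\}|$, we have $\sum_s r(s) = \sum_t d(t) = |G|$, and because the support of $r$ has size at most $N$, Cauchy-Schwarz gives the additive-energy estimate $\sum_s r(s)^2 \ge |G|^2/N$, which measures how many additive quadruples sit inside $G$.

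First I would apply a dyadic pigeonhole, passing to a subgraph $G_0 \subseteq G$ of size $\gtrsim |G|/\log^2 N$ on which both multiplicities $r(a+b)$ and $d(a-b)$ lie in fixed dyadic windows and are hence essentially constant. On such a quasi-regular subgraph, the number of distinct sums and the number of distinct differences can be read off directly from $|G_0|$ and the two common multiplicities. Next, a BSG-type refinement applied to $G_0$ extracts large subsets $A'\subseteq A$, $B'\subseteq B$ for which the \emph{full} sumset $|A' + B'|$ is bounded by a small power of $N$. In the torsion-free group $Z$, the Plünnecke-Ruzsa inequality then transfers this into a bound on $|A' - B'|$, and a Ruzsa-covering argument recovers the original partial difference set $\{a - b : (a,b)\in G\}$ from $A' - B'$ together with the small pieces discarded at the earlier stages.

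Balancing the dyadic thresholds, the BSG losses, and the Plünnecke-Ruzsa exponent produces the final $N^{11/6}$ bound. I expect the main obstacle to be precisely this parameter optimization: the exponent $11/6$ is sharp for the current technology, and any slack in the intermediate pigeonhole or covering steps degrades it. The torsion-free hypothesis on $Z$ is genuinely used in the Plünnecke-Ruzsa step, where iterated sumset estimates could otherwise be corrupted by modular obstructions. For the explicit constants and the exact form of the BSG refinement I would defer to \cite{katz1999bounds}, since reproducing their careful bookkeeping is unlikely to yield improvements beyond cosmetic ones; the intended downstream use in the paper only needs the statement in the black-box form given, applied with $A = B = \{\P(w[1..i]) : 0 \le i \le n\}$ and $G$ the set of pairs $(S_i,S_k)$ admitting an Abelian-square midpoint $j$ with $S_i + S_k = 2S_j$.
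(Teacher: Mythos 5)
The paper offers no proof of this lemma: it is imported from Katz and Tao \cite{katz1999bounds} and used strictly as a black box, and your closing sentence correctly describes the only thing the paper actually does with it (instantiating $A$ and $B$ with prefix Parikh vectors and $G$ with the pairs admitting an Abelian-square midpoint). To that extent, your decision to defer the substance to the citation matches the authors' own treatment.

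As a proof attempt, however, what you have written has a genuine gap, and the gap sits exactly where the theorem lives. Your outline never derives the exponent $11/6$; it asserts that ``balancing the dyadic thresholds, the BSG losses, and the Pl\"unnecke--Ruzsa exponent'' produces it, but that pipeline is not the Katz--Tao argument and would not yield $2-\tfrac{1}{6}$. The Balog--Szemer\'edi--Gowers route converts a partial sumset bound into a full one only at a polynomial cost $K^{O(1)}$ in the density parameter $K=N^2/|G|$; feeding that into Pl\"unnecke--Ruzsa and a covering iteration gives an exponent of the form $2-\tfrac{1}{C}$ with $C$ governed by the BSG losses --- essentially how Bourgain obtained the weaker $2-\tfrac{1}{13}$. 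Katz and Tao's improvement to $2-\tfrac{1}{6}$ comes instead from a self-contained iterated popularity/refinement argument on the bipartite graph $G$, whose engine is the observation that in a torsion-free group the pair $(a+b,\,a-b)$ determines $(a,b)$ because doubling is injective --- the same fact that powers the paper's application, where an Abelian square is witnessed by $\mathcal{P}_{i-1}+\mathcal{P}_{j}=2\mathcal{P}_{(i+j-1)/2}$. Note also that the statement here is constant-free ($\le N^{11/6}$, not $O(N^{11/6})$); this survives only because all four cardinalities multiply under Cartesian powers of $Z$ (which remain torsion-free), so a tensor-power argument removes constants, whereas your $\log^2 N$ pigeonhole losses would otherwise already violate the stated inequality for small $N$. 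If you mean to use the lemma as a citation, state it as such; if you mean to prove it, the BSG/Pl\"unnecke--Ruzsa scaffolding must be replaced by the actual Katz--Tao refinement scheme.
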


  \begin{observation}
    The set $\mathbb{Z}^{\sigma}$ (containing all Parikh vectors) with addition is
    a torsion-free Abelian group.
  \end{observation}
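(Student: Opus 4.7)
The plan is to verify the two required properties separately: that $\mathbb{Z}^\sigma$ forms an Abelian group under componentwise addition, and that this group has no nontrivial torsion.

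First, I would note that the group structure on $\mathbb{Z}^\sigma$ is inherited coordinate-wise from $(\mathbb{Z}, +)$. Associativity, commutativity, the existence of the neutral element $(0, \ldots, 0)$, and the existence of additive inverses all follow immediately by applying the corresponding properties of $\mathbb{Z}$ in each coordinate. This step is entirely routine and can be stated in a single sentence.

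Second, to establish torsion-freeness, I would take an arbitrary element $z = (z_1, \ldots, z_\sigma) \in \mathbb{Z}^\sigma$ and an integer $n$ satisfying $nz = 0$. Since integer multiplication of a vector in $\mathbb{Z}^\sigma$ is defined coordinate-wise, this equation reduces to the system $n z_i = 0$ in $\mathbb{Z}$ for every $i \in \{1, \ldots, \sigma\}$. Because $\mathbb{Z}$ itself is torsion-free (indeed, an integral domain), each of these equations forces $n = 0$ or $z_i = 0$. If $n \ne 0$, then $z_i = 0$ for every $i$, so $z = 0$, which is precisely the torsion-free condition required by \cref{lem:tao}.

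Since the fact is elementary, there is no genuine obstacle; the only matter worth stating explicitly is that torsion-freeness passes from a ring to its finite direct power, which is immediate from the coordinate-wise definition of scalar multiplication. The observation then allows one to apply \cref{lem:tao} with $Z = \mathbb{Z}^\sigma$ and take $A$, $B$ to be suitable sets of Parikh vectors when deriving the $O(n^{11/6})$ upper bound on $\SQPABEL(n,\sigma)$.
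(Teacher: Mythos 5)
Your proof is correct and is exactly the routine coordinate-wise verification; the paper itself gives no proof for this observation, treating it as immediate. Your argument for torsion-freeness (reducing $nz=0$ to $nz_i=0$ in $\mathbb{Z}$ and using that $\mathbb{Z}$ is an integral domain) matches the paper's definition of torsion-free precisely, so there is nothing to add.
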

  
  \begin{theorem}
  $\SQPABEL(n,\sigma)\le (n-1)^{11/6}$ for each $\sigma\ge 1$ and $n\ge 1$.
  \end{theorem}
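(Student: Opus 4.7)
\medskip

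\noindent\textbf{Proof proposal.}
The plan is to encode Abelian squares of $w$ as a \emph{sumset configuration} in $\mathbb{Z}^\sigma$ and invoke \cref{lem:tao}. For $j\in\{0,1,\ldots,n\}$ let $p_j=\P(w[1..j])$ denote the Parikh vector of the length-$j$ prefix (with $p_0=\mathbf{0}$). Because the sum of coordinates of $p_j$ equals $j$, the vectors $p_0,\ldots,p_n$ are pairwise distinct. A fragment $w[j_1+1..j_3]$ is an Abelian square with ``midpoint'' $j_2$ precisely when
\[p_{j_2}-p_{j_1}=p_{j_3}-p_{j_2},\]
equivalently $p_{j_1}+p_{j_3}=2p_{j_2}$; moreover the Parikh vector of each half (i.e.\ the square vector) equals $p_{j_2}-p_{j_1}=\tfrac12(p_{j_3}-p_{j_1})$. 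So each Abelian-equivalence class of Abelian squares in $w$ is determined by the value $p_{j_3}-p_{j_1}$ for at least one choice of $(j_1,j_2,j_3)$ with $0\le j_1<j_2<j_3\le n$.

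Next, I would set up the Katz--Tao lemma. Take $Z=\mathbb{Z}^\sigma$, which is torsion-free, and define
\[A=\{p_2,p_3,\ldots,p_n\},\qquad B=\{p_0,p_1,\ldots,p_{n-2}\},\]
both of size exactly $n-1$. Let
\[G=\bigl\{(p_{j_3},p_{j_1})\in A\times B\;:\;\exists\, j_2\text{ with }j_1<j_2<j_3\text{ and } p_{j_1}+p_{j_3}=2p_{j_2}\bigr\}.\]
For any $(a,b)\in G$, the sum $a+b=2p_{j_2}$ lies in $\{2p_{j_2}:1\le j_2\le n-1\}$, a set of size $n-1$. Thus $\max(|A|,|B|,|\{a+b:(a,b)\in G\}|)\le n-1$, and \cref{lem:tao} yields $|\{a-b:(a,b)\in G\}|\le (n-1)^{11/6}$.

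Finally, since $\mathbb{Z}^\sigma$ is torsion-free the map $v\mapsto 2v$ is injective, so distinct square vectors $v=p_{j_2}-p_{j_1}$ give distinct differences $a-b=2v$. Because every Abelian-nonequivalent square in $w$ contributes a distinct square vector, we conclude $\SQPABEL(w)\le (n-1)^{11/6}$, finishing the proof (the case $n=1$ is trivial as there are no squares).

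\medskip\noindent The only real work is matching the combinatorial object (Abelian squares) to the additive one (a sum-and-difference problem). Once the correspondence via prefix Parikh vectors is set, \cref{lem:tao} does all the heavy lifting; no obstacle beyond bookkeeping is anticipated.
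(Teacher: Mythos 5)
Your proposal is correct and follows essentially the same route as the paper: the same sets $A=\{\P_2,\ldots,\P_n\}$ and $B=\{\P_0,\ldots,\P_{n-2}\}$, the same graph $G$ of prefix-Parikh-vector pairs with sums constrained to $\{2\P_1,\ldots,2\P_{n-1}\}$, and the same identification of Abelian-equivalence classes with differences $a-b$ before invoking the Katz--Tao lemma. The only (harmless) cosmetic difference is that you phrase membership in $G$ via the midpoint index $j_2$ rather than directly via the Abelian-square condition on $w[i..j]$.
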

  \begin{proof}
  Consider a word $w \in \Sigma^n$ where $\Sigma=\{0,\ldots,\sigma-1\}$ and a torsion-free Abelian group $Z=\mathbb{Z}^\sigma$.
  For $0\le i \le n$ let $\P_i = \P(w[1..i])\in \mathbb{Z}^\sigma$ be the Parikh
  vector of the $i$-th prefix of $w$. We set 
  \[A=\{\P_2,\P_3,\ldots,\P_n\},\ B=\{\P_0,\P_1,\ldots,\P_{n-2}\}\]
  and
  \[G=\{(\P_{j},\P_{i-1}) : w[i..j]\text{ is an Abelian square}\}.\]
  
  Note that $G\subseteq A\times B$ because every Abelian square has length at least 2.
  Moreover, $w[i..j]$ is an Abelian square if and only if $\P_{i-1}+\P_{j}=2\P_{\frac{i+j-1}{2}}$,
  so \[ \{a+b : (a,b)\in G\} \subseteq \{2\P_1,2\P_2,\ldots,2\P_{n-1}\}.\]
  This lets us use \cref{lem:tao} for $(A,B,G)$ with $N=n-1$.
  We obtain $|\{a-b : (a,b)\in G\}|\le (n-1)^{11/6}$. However,
  since two Abelian squares $w[i..j]$ and $w[i'..j']$ are equivalent if and only if $\P_{j}-\P_{i-1}=\P_{j'}-\P_{i'-1}$,
  we actually have \[\SQPABEL(w)=|\{a-b : (a,b)\in G\}| \le (n-1)^{11/6}.\]
  Since the choice of $w$ was arbitrary, we conclude $\SQPABEL(n,\sigma)\le (n-1)^{11/6}$.
  \end{proof}
  
  \begin{remark}
    Katz \& Tao~\cite{katz1999bounds} apply a construction of Ruzsa \cite{ruzsa} to show
    that the upper bound on $|\{a-b : (a,b)\in G\}|$ cannot be improved beyond $N^{\log_3(6)}\approx N^{1.631}$.
    However, their example does not seem to adapt to the setting of Abelian squares.
  \end{remark}
 
  In the second part of this section we show that a large number of
  Abelian squares enforces that a word contains a large number of blocks.
  Recall that a block is a maximal uniform fragment, i.e.,
  a maximal fragment whose letters are all equal.
  
  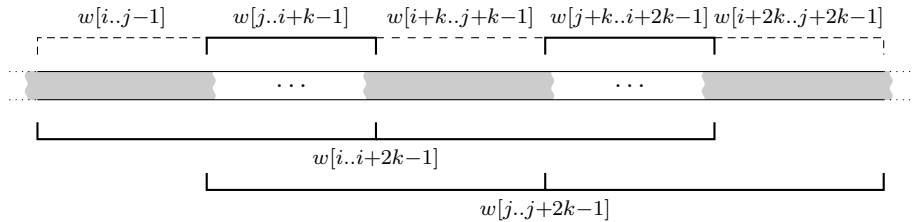
\begin{figure}[b]
 \centering
 \begin{tikzpicture}[scale=0.75]

\begin{scope}
\clip (-1,0) rectangle (15,0.5);
\fill[decorate,decoration = {snake,amplitude =.2mm, segment length = 2mm},fill=black!20] (5.3,-1) rectangle (8.7,1);
\fill[decorate,decoration = {snake,amplitude =.2mm, segment length = 2mm},fill=black!20] (-.7,-1) rectangle (2.7,1);
\fill[decorate,decoration = {snake,amplitude =.2mm, segment length = 2mm},fill=black!20] (11.3,-1) rectangle (14.7,1);
\end{scope}
 \draw (4,0.25) node {$\ldots$};
 \draw (10,0.25) node {$\ldots$};
 \draw (-.5,0) -- (14.5,0) (-.5,.5) -- (14.5,.5);
 \draw[dotted] (-1,0) -- (-.5,0) (14.5,0) -- (15,0) (-1,.5) -- (-.5,.5) (14.5,.5) -- (15,.5);

  \draw[yshift=0.8cm,thick] (2.5,0) -- (2.5,0.3) -- (5.5,0.3) -- (5.5,0);
  \draw[xshift=3cm,yshift=0.8cm,dashed] (2.5,0.3) -- (5.5,0.3);
  \draw[xshift=6cm,yshift=0.8cm,thick] (2.5,0) -- (2.5,0.3) -- (5.5,0.3) -- (5.5,0);
  
  \draw (1,1.1) node[above] {\footnotesize{$w[i..j{-}1]$}};
  \draw (4,1.1) node[above] {\footnotesize{$w[j..i{+}k{-}1]$}};  
  \draw (7,1.1) node[above] {\footnotesize{$w[i{+}k..j{+}k{-}1]$}};
  \draw (10,1.1) node[above] {\footnotesize{$w[j{+}k..i{+}2k{-}1]$}};  
  \draw (13,1.1) node[above] {\footnotesize{$w[i{+}2k..j{+}2k{-}1]$}};

  \draw[densely dashed,yshift=0.8cm] (-0.5,0) -- (-0.5,0.3) -- (2.5,0.3);
  \draw[densely dashed,yshift=0.8cm,xshift=12cm] (-0.5,0.3) -- (2.5,0.3) -- (2.5,0);

  \begin{scope}[yshift=.6cm]
  \draw[yshift=-1cm,thick] (-0.5,0) -- (-0.5,-0.3) -- (5.5,-0.3) -- (5.5,0);
  \draw[xshift=6cm,yshift=-1cm,thick] (-0.5,-0.3) -- (5.5,-0.3) -- (5.5,0);
  \draw (5.5,-1.3) node[below] {\footnotesize{$w[i..i{+}2k{-}1]$}};
\end{scope}
  \begin{scope}[xshift=3cm,yshift=-.3cm]
  \draw[yshift=-1cm,thick] (-0.5,0) -- (-0.5,-0.3) -- (5.5,-0.3) -- (5.5,0);
  \draw[xshift=6cm,yshift=-1cm,thick] (-0.5,-0.3) -- (5.5,-0.3) -- (5.5,0);
  \draw (5.5,-1.3) node[below] {\footnotesize{$w[j..j{+}2k{-}1]$}};
  \end{scope}

\end{tikzpicture}
 \caption{
   Illustration of the ``$j-i < k$'' case in the proof of \cref{lem:uniform}.
   Dark fragments represent occurrences of the same letter $a$.
   In the proof we show that the dashed fragments all have the form $a^{j-i}$, hence the two Abelian squares
   in the bottom have the same Parikh vectors.
 }\label{fig:uni}
\end{figure}

  \begin{lemma}\label{lem:uniform}
  Let $w$ be a word, $k$ be a positive integer and let $i,j$, $i<j$, be indices such that $w[i..i+2k-1]$ and $w[j..j+2k-1]$
  are Abelian squares. If $w[i+k..j+k-1]$ is uniform, then $\P(w[i..i+2k-1])=\P(w[j..j+2k-1])$.
  \end{lemma}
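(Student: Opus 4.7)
The plan is to split the argument into two cases depending on the size of $d := j-i$ relative to $k$. In both cases, let $a$ denote the common letter appearing throughout the uniform fragment $w[i+k..j+k-1]$ (which has length $d$).

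When $d \ge k$, the uniform fragment contains the entire second half $w[i+k..i+2k-1]$ of the first Abelian square, forcing that half to equal $a^k$. The Abelian-square condition then forces the first half to have the same Parikh vector, so $w[i..i+2k-1] = a^{2k}$. A symmetric argument, using that $w[j..j+k-1]$ also lies inside the uniform fragment, gives $w[j..j+2k-1] = a^{2k}$, and the two Parikh vectors trivially coincide.

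When $d < k$, I would partition $w[i..j+2k-1]$ into five consecutive fragments $A,B,C,D,E$ of lengths $d,\ k-d,\ d,\ k-d,\ d$, chosen so that $C = w[i+k..j+k-1] = a^d$ is precisely the uniform part. The two overlapping Abelian squares then read $w[i..i+2k-1] = ABCD$ and $w[j..j+2k-1] = BCDE$, and the Abelian-square conditions translate to the two equations $\P(A)+\P(B) = \P(C)+\P(D)$ and $\P(B)+\P(C) = \P(D)+\P(E)$. Adding them yields $\P(A)+\P(E) = 2\P(C)$, a vector whose only nonzero coordinate is the one corresponding to $a$.

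The decisive step is to invoke non-negativity of Parikh vectors: since $\P(A)$ and $\P(E)$ have non-negative integer coordinates whose component-wise sum is supported only on letter $a$, each of the two vectors must individually be supported only on $a$, and then $|A|=|E|=d$ forces $\P(A) = \P(E) = \P(C)$. Substituting back gives $\P(w[i..i+2k-1]) = \P(ABCD) = \P(BCDE) = \P(w[j..j+2k-1])$. I do not anticipate any real obstacle: the main content is setting up the five-fragment partition correctly, and then observing that non-negativity upgrades the additive identity $\P(A)+\P(E)=2\P(C)$ into an equality of the individual summands.
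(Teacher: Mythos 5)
Your proof is correct and follows essentially the same route as the paper's: the paper also splits on $j-i\ge k$ versus $j-i<k$ and, in the latter case, derives exactly your identity $\P(A)+\P(E)=2\P(C)$ (for the same five-block decomposition) and then uses non-negativity of Parikh vectors to force $A$ and $E$ to equal $a^{j-i}$. One small slip in wording: that identity follows by \emph{subtracting} your two equations rather than adding them (adding gives $\P(A)+2\P(B)=2\P(D)+\P(E)$), but the conclusion you draw is the correct one.
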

  \begin{proof}
  Let us define $a\in \Sigma$ so that $w[i+k..j+k-1]=a^{j-i}$.
  First, we suppose that $j-i\ge k$. 
  In this case 
  \[w[i+k..i+2k-1]=w[j..j+k-1]=a^k,\]
  and thus $w[i..i+2k-1]=w[j..j+2k-1]=a^{2k}$.
  
  Consequently, we may assume that $j-i < k$; see \cref{fig:uni}. As $w[i..i+2k-1]$ and $w[j..j+2k-1]$ are Abelian squares, we have
  \[\P(w[i..i+k-1])+\P(w[j+k..j+2k-1])=\P(w[j..j+k-1])+\P(w[i+k..i+2k-1]).\]
  Since $w[i..i+k-1]$ overlaps with $w[j..j+k-1]$ on $w[j..i+k-1]$, and $w[i+k..i+2k-1]$ overlaps with $w[j+k..j+2k-1]$ on $w[j+k..i+2k-1]$
  (see the upper part of \cref{fig:uni}), this yields:
  \[ \P(w[i..j-1])+\P(w[i+2k..j+2k-1])=\P(w[i+k..j+k-1])+\P(w[i+k..j+k-1]).\]
  Hence, due to $w[i+k..j+k-1]=a^{j-i}$, we must also have $w[i..j-1]=a^{j-i}$ and $w[i+2k..j+2k-1]=a^{j-i}$.
  Because
  \[\P(w[j..j+2k-1])=\P(w[i..i+2k-1])-\P(w[i..j-1])+\P(w[i+2k..j+2k-1])\]
  this concludes the proof.  
  \end{proof}
  
  \begin{theorem}\label{thm:ab-upper-bound-k-blocks}
    A word $w$ of length $n$ with $m$ blocks contains at most $m$ non\-equivalent Abelian squares of any fixed length. 
    Consequently, $\SQPABEL(w) \le \tfrac{nm}{2}.$
  \end{theorem}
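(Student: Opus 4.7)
\medskip

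\noindent\textbf{Proof plan.} I would fix a length $2k$ and analyze the Abelian squares of length exactly $2k$ occurring in $w$, then sum the resulting bound over all $k$ to get the global estimate.

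Let $i_1<i_2<\cdots<i_t$ be the starting positions of all occurrences of length-$2k$ Abelian squares in $w$. For each consecutive pair $(i_s,i_{s+1})$, consider the ``middle'' fragment $w[i_s+k\mathinner{.\,.} i_{s+1}+k-1]$. \Cref{lem:uniform} says that if this middle fragment is uniform, then the two Abelian squares $w[i_s\mathinner{.\,.} i_s+2k-1]$ and $w[i_{s+1}\mathinner{.\,.} i_{s+1}+2k-1]$ have equal total Parikh vectors, and therefore (since each is an Abelian square) also equal Parikh vectors of their halves; hence they are Abelian-equivalent. So two consecutive occurrences in the list contribute a new equivalence class only if their middle fragment is non-uniform.

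The middle fragments $w[i_s+k\mathinner{.\,.} i_{s+1}+k-1]$ for $s=1,\ldots,t-1$ are pairwise disjoint and nonempty consecutive fragments of $w$. A fragment is non-uniform precisely when it contains at least one block boundary, and since the fragments are disjoint, each of the $m-1$ block boundaries of $w$ lies in at most one of them. Thus at most $m-1$ of the middle fragments are non-uniform, so the number of Abelian-equivalence classes of length-$2k$ squares in $w$ is at most $1+(m-1)=m$.

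Summing over the $\lfloor n/2\rfloor$ possible lengths $2k\in\{2,4,\ldots,2\lfloor n/2\rfloor\}$ yields $\SQPABEL(w)\le \lfloor n/2\rfloor\cdot m\le \tfrac{nm}{2}$. The only nontrivial step is the first one, and the work has already been done in \cref{lem:uniform}; the remainder is a straightforward disjointness/counting argument, so I do not anticipate any serious obstacle.
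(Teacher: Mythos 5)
Your proposal is correct and follows essentially the same route as the paper: both arguments rest entirely on \cref{lem:uniform} and then count block boundaries inside the pairwise disjoint middle fragments between consecutive occurrences. The paper streamlines the bookkeeping slightly by first fixing one occurrence per equivalence class (so that every consecutive middle fragment is non-uniform, giving $s\le m$ directly), whereas you keep all occurrences and bound the number of class changes, but the two counts are interchangeable.
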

  \begin{proof}
    Suppose that there are $s$ nonequivalent Abelian squares of length $2k$.  
    Let us fix an arbitrary occurrence of each square and let $i_1<\dots < i_s$ be their starting positions. 
    By \cref{lem:uniform}, none of the words $w[i_p+k..i_{p+1}+k-1]$, $1\le p < s$, is uniform. 
    Consequently, $w$ has at least $s$ blocks, i.e., $s \le m$.
  \end{proof}

  \cref{thm:ab-upper-bound-k-blocks} in particular implies a tight asymptotic bound for the
  number of non-equivalent Abelian squares in the lower-bound family of words $\ww_k$.
  \begin{observation}
    $\SQPABEL(\ww_k) = \Theta(|\ww_k|^{1.5})$.
  \end{observation}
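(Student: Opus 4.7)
The proof splits into matching lower and upper bounds. The lower bound $\SQPABEL(\ww_k)=\Omega(|\ww_k|^{3/2})$ is exactly the content of the theorem preceding this observation (established via \cref{lem:unbalanced} and \cref{obs:S}), so nothing new is needed on that side.

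For the upper bound, the plan is to invoke \cref{thm:ab-upper-bound-k-blocks} after counting the blocks of $\ww_k$ precisely. The word
\[\ww_k=(0^k1^k)^{3k}(0^{k+1}1^{k+1})^{k}\]
consists of alternating maximal runs of $0$'s and $1$'s (since each exponent $k$ and $k+1$ is positive and consecutive factors use opposite letters). Hence the first part contributes $6k$ blocks and the second part contributes $2k$ blocks, giving $m=8k$ blocks in total. Combined with $n=|\ww_k|=8k^2+2k$, \cref{thm:ab-upper-bound-k-blocks} yields
\[\SQPABEL(\ww_k)\le \tfrac{nm}{2}=\tfrac{(8k^2+2k)\cdot 8k}{2}=O(k^3)=O(|\ww_k|^{3/2}).\]

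There is no real obstacle here: the only thing to check carefully is that consecutive blocks in $\ww_k$ indeed carry different letters (so that the block count really is $8k$ rather than fewer), which is immediate from the alternation of $0$'s and $1$'s in the construction. Matching the two bounds then gives $\SQPABEL(\ww_k)=\Theta(|\ww_k|^{3/2})$, showing that the lower-bound family is essentially tight against the block-based upper bound of \cref{thm:ab-upper-bound-k-blocks}.
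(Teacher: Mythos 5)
Your proof is correct and follows exactly the route the paper intends: the lower bound is the preceding theorem, and the upper bound is \cref{thm:ab-upper-bound-k-blocks} applied to $\ww_k$, which has $m=8k=O(\sqrt{n})$ blocks and length $n=8k^2+2k$, giving $O(nm)=O(n^{3/2})$. The paper leaves this argument implicit (it only remarks that the observation follows from \cref{thm:ab-upper-bound-k-blocks}), and your block count and arithmetic check out.
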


  \section{Bounds for Order-Preserving Squares}\label{sec:op}
  Recall that $uv$ is an order-preserving square if $|u|=|v|$ and there exists a strictly increasing
  bijection $f : \Alph(u) \to \Alph(v)$ such that $v[i]=f(u[i])$ for all $i=1,\ldots,|u|$.

  \begin{remark}\label{rmk:op_last}
    A known property of ordinary squares is that each position of a word contains at most
    two rightmost occurrences of a square; see \cite{fraenkel-simpson}.
    This property immediately implies that a word of length $n$ contains at most $2n$ distinct squares.
    Unfortunately, for order-preserving squares an analogous property does not hold.
    For example, the following word of length 28 being a permutation of $\{0,\ldots,27\}$:
    \[0\,\, 3\,\, 1\,\, 6\,\, 2\,\, 7\,\, 4\,\, 8\,\, 5\,\, 11\,\, 9\,\, 13\,\, 10\,\, 16\,\, 12\,\, 17\,\, 14\,\, 20\,\, 15\,\, 21\,\, 18\,\, 22\,\, 19\,\, 25\,\, 23\,\, 26\,\, 24\,\, 27\]
    contains three rightmost occurrences of nonequivalent order-preserving squares (of lengths 16, 20, and 28) starting at the first position.
  \end{remark}
  
  Recall that $uv$ is a parameterized square
  if $|u|=|v|$ and there is a bijection $f: \Alph(u)\to \Alph(v)$ such that
  $v[i]=f(u[i])$ for all $i=1,\ldots,|u|$.
  Note that, obviously, every order-preserving square is a parameterized square.
  A parameterized square $uv$ is called \emph{imbalanced} if $\Alph(u) \ne \Alph(v)$.

  \begin{lemma}\label{lem:imbalanced}
  Let $w\in \Sigma^n$ and $\sigma=|\Sigma|$.
  At most $\binom{\sigma}{2} n$ fragments of $w$ are imbalanced parameterized squares.
  \end{lemma}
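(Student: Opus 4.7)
The plan is to charge each imbalanced parameterized square to a pair consisting of an unordered pair of distinct letters and a position of $w$, and then to prove that this charging is injective. Since there are $\binom{\sigma}{2}$ such pairs and $n$ positions, this yields the bound $\binom{\sigma}{2} n$.

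Given an imbalanced parameterized square $uv = w[l..l+2k-1]$ with witnessing bijection $f \colon \Alph(u) \to \Alph(v)$, first observe that $u \ne v$; otherwise $f$ would be the identity on $\Alph(u)$, making $\Alph(u) = \Alph(v)$ and contradicting imbalance. Hence the smallest index $i^*$ with $u[i^*] \ne v[i^*]$ exists; set $a = u[i^*]$ and $b = v[i^*]$, so $a \ne b$. A key structural fact is that $i^*$ is simultaneously the first occurrence of $a$ in $u$ and of $b$ in $v$: an earlier occurrence of $a$ at some $j < i^*$ in $u$ would force $v[j] = f(a) = b$, contradicting $v[j] = u[j] = a$ (where the last equality uses minimality of $i^*$), and the argument for $b$ in $v$ is symmetric. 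I would then charge $uv$ to $(\{a,b\},\,q)$ where $q = l + k + i^* - 1$ is the position in $w$ of $v[i^*]$, so that $w[q] \in \{a,b\}$ and $w[q-k] \in \{a,b\} \setminus \{w[q]\}$.

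To establish injectivity, suppose two imbalanced parameterized squares of half-lengths $k_1 < k_2$ are charged to the same pair $(\{a,b\}, q)$. Without loss of generality $w[q] = b$, so $w[q-k_i] = a$ for both squares. The first-occurrence property says that no $a$ appears in $w[l_i..q-k_i-1]$. In one case, when $i^*_1 \ge k_2 - k_1 + 1$, the position $q - k_2$ lies in $u_1$ strictly before its first $a$, contradicting $w[q-k_2] = a$. In the complementary case, I would invoke the parameterized-square identity $v_i[j] = f_i(u_i[j])$ at positions shared between the two squares to trace where $a = f_i^{-1}(b)$ must appear, and combine this with the first-occurrence property of $b$ in $v_i$ (valid on $[l_i+k_i,q-1]$) to force a contradiction with either the bijectivity of $f_i$ or the periodic structure of $w[l_i..q-1]$.

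The main obstacle is precisely this complementary case of the injectivity proof: the direct first-occurrence argument fails, and one must carefully combine the periodic structure of $w[l_i..q-1]$ (which has period $k_i$ by the minimality of $i^*_i$) with the parameterized equivalence on the remaining positions of the square. The case analysis splits further on whether $u_1 \subseteq v_2$ (when $k_2 \ge 2k_1$) or $u_1$ straddles $u_2$ and $v_2$, and in each situation one extracts the contradiction from the bijectivity of $f_2$ applied at the position where $u_1$'s first occurrence of $a$ sits within $v_2$.
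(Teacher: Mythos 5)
Your charging map is not injective, so the core of the argument fails. Concretely, take $w=0011$. The fragment $w[2..3]=01$ (with $l_1=2$, $k_1=1$, $i^*_1=1$) and the fragment $w[1..4]=0011$ (with $l_2=1$, $k_2=2$, $i^*_2=1$) are both imbalanced parameterized squares; both have first differing letters $a=0$, $b=1$; and both have the letter $b$ of the first difference sitting at position $q=l+k+i^*-1=3$. Hence both are charged to $(\{0,1\},3)$. This collision lies precisely in the ``complementary case'' $i^*_1<k_2-k_1+1$ that you leave as a sketch, and it shows that no amount of work with periodicity or the bijectivity of $f_2$ can close that case: the configuration you would need to rule out actually occurs. (A larger instance with both squares of length at least $4$ is $w=00001111$, where $w[3..6]$ and $w[1..8]$ collide at $(\{0,1\},5)$.) A warning sign is that your choice of $a,b$ --- the letters at the first position where $u$ and $v$ differ --- exploits imbalance only through $u\ne v$; balanced parameterized squares with $u\neq v$ also carry such a pair, and there can be $\Theta(n^2)$ of them (antisquares in $(01)^n$), so the data $(\{a,b\},q)$ alone cannot be expected to identify the square.

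The paper's proof differs in the two places where yours breaks. First, it fixes the \emph{starting} position rather than the position of the first mismatch: it suffices to show that at most $\binom{\sigma}{2}$ \emph{prefixes} of a word are imbalanced parameterized squares. Second, it chooses $b$ to be the leftmost letter of $v$ that does \emph{not belong to} $\Alph(u)$ --- this is where imbalance enters essentially, since $|\Alph(u)|=|\Alph(v)|$ together with $\Alph(u)\neq\Alph(v)$ guarantees $\Alph(v)\setminus\Alph(u)\neq\emptyset$. Setting $a=f^{-1}(b)$, the occurrences so identified are the leftmost occurrences of $a$ and of $b$ in the entire word, so $|u|$ is recovered as the difference of two positions determined by $\{a,b\}$ alone, and the map to $2$-element subsets of $\Sigma$ is injective. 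In the example above, the two squares start at different positions and therefore never compete. If you want to salvage your write-up, replace $(\{a,b\},q)$ by $(\{a,b\},l)$ with the paper's choice of $b$; with your current choice of $a,b$ the statement you would need for injectivity is simply false.
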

  \begin{proof}
    It suffices to prove that at most $\binom{\sigma}{2}$ prefixes of $w$ are imbalanced parameterized squares.
  We shall construct an injective function $g$ mapping such squares to 2-element subsets of $\Sigma$:
  we define $g(uv)=\{a,b\}$ where $b$ is leftmost letter in $v$ which does not belong to $u$, and $a$ is its counterpart in $u$,
  i.e., $a=f^{-1}(b)$ where $f$ is the bijection corresponding to $uv$.
  Let $i$ be the leftmost position such that $u[i]=a$ and $v[i]=b$.
  Observe that $w[i]=a$ and $w[i+|u|]=b$
  are the leftmost occurrences of $a$ and $b$, respectively, in $w$. Consequently, $|u|$ can be reconstructed
  as the difference between these two positions. Hence, $g$ is indeed an injection.
  \end{proof}
  
  \begin{corollary}\label{cor:op}
  Let $w\in \Sigma^n$ and $\sigma=|\Sigma|$. At most $\binom{\sigma}{2} n$ fragments of $w$
  are order-preserving squares but not ordinary squares.
  \end{corollary}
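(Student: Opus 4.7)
The plan is to show that every order-preserving square that is not an ordinary square is in fact an imbalanced parameterized square, and then apply \cref{lem:imbalanced} directly.

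First I would argue the following key observation: if $uv$ is an order-preserving square with $\Alph(u)=\Alph(v)$, then $uv$ must be an ordinary square. The justification is that the associated bijection $f\colon \Alph(u)\to\Alph(v)$ is by definition strictly increasing; however, the only strictly increasing bijection from a finite totally ordered set to itself is the identity. Thus $f=\id$, and since $v[i]=f(u[i])=u[i]$ for all $i$, we get $u=v$, so $uv$ is an ordinary square.

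Taking the contrapositive, every order-preserving square $uv$ that is not an ordinary square satisfies $\Alph(u)\neq\Alph(v)$. Since order-preserving squares are a special case of parameterized squares, such a fragment is an imbalanced parameterized square in the sense defined just before \cref{lem:imbalanced}. Applying \cref{lem:imbalanced} then yields the desired bound of $\binom{\sigma}{2} n$ on the number of fragments of $w$ that are order-preserving but not ordinary squares.

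There is no real obstacle here; the only subtlety is checking that a strictly increasing bijection between two equal finite subsets of $\Sigma$ must be the identity, which follows immediately from the fact that such a bijection preserves the induced linear order and is therefore determined by the sorted listing of the common set. The rest is a direct invocation of the preceding lemma, so the corollary is essentially immediate once the reduction to the imbalanced case is made.
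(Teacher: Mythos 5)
Your proof is correct and follows exactly the same route as the paper: observe that a strictly increasing bijection between equal finite alphabets must be the identity, so any order-preserving square that is not an ordinary square is an imbalanced parameterized square, and then invoke \cref{lem:imbalanced}. No gaps.
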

  \begin{proof}
  Let $w[i..j]=uv$ be an order-preserving square. If $\Alph(u)\ne \Alph(v)$, then $w[i..j]$ is an imbalanced
  parameterized square. Otherwise, the corresponding monotone bijection $f: \Alph(u)\to \Alph(v)$ must be the identity.
  Hence, $w[i..j]$ is an ordinary square. Consequently, \cref{lem:imbalanced} concludes the proof.
  \end{proof}
  
  \begin{theorem}\label{thm:op}
  $\SQOP(n,\sigma)\le (\binom{\sigma}{2}+\frac{11}{6})n$.
  \end{theorem}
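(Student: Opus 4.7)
The plan is to partition the set of distinct order-preserving squares that occur in $w$ into two classes and bound each class separately.

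First, I would observe that every ordinary square is also an order-preserving square (taking $f=\id$), so the set of distinct order-preserving subwords of $w$ splits into (a) those that, viewed just as words, are ordinary squares and (b) those that are not. This is a well-defined partition because "being an ordinary square" is a property of the word itself, independent of which fragment realizes it.

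For class (a) I would invoke the Deza--Franek--Thierry bound \cite{DBLP:journals/dam/DezaFT15}, which gives at most $\tfrac{11}{6}n$ distinct ordinary squares in any word of length $n$. For class (b), I would pick, for each such distinct word, an arbitrary fragment of $w$ realizing it. Since the word is not an ordinary square, the chosen fragment is an order-preserving square that is not an ordinary square, and distinct words give distinct fragments. Thus the size of class (b) is bounded by the number of such fragments, which is at most $\binom{\sigma}{2}n$ by \cref{cor:op}. Adding the two bounds yields $\SQOP(w)\le \bigl(\binom{\sigma}{2}+\tfrac{11}{6}\bigr)n$, and the theorem follows by taking the maximum over all $w\in\Sigma^n$.

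There is essentially no obstacle here beyond bookkeeping; the only subtle point is the distinction between counting fragments (as in \cref{cor:op}) and counting distinct subwords (as in the definition of $\SQOP$). Mapping each distinct non-ordinary-square order-preserving square to one realizing fragment handles this cleanly, because the inequality only goes in the direction we need.
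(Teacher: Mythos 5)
Your proposal is correct and follows the paper's proof essentially verbatim: split the distinct order-preserving squares into those that are ordinary squares as words (bounded by $\frac{11}{6}n$ via Deza et al.) and the rest (bounded by $\binom{\sigma}{2}n$ via \cref{cor:op}, counting one realizing occurrence per distinct word). The care you take with the fragment-versus-subword distinction is the right reading of what the paper leaves implicit.
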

  \begin{proof}
  A result of Deza et al.~\cite{DBLP:journals/dam/DezaFT15} shows that a word of length $n$ contains
  at most $\frac{11}{6}n$ distinct ordinary squares.
  By \cref{cor:op}, the remaining order-preserving squares have at most $\binom{\sigma}{2} n$ occurrences.
  \end{proof}

  \section{Bounds for Parameterized Squares}\label{sec:param}
  We start with a remark similar to \cref{rmk:op_last}.

  \begin{remark}
   The word
    \[0\, 1\, 2\, 0\, 3\, 0\, 1\, 3\, 0\, 2\, 3\, 1\, 3\, 0\]
    contains three parameterized squares starting at the first position:
    \[0\, 1\, 2\, 0\ \ 3\, 0\, 1\, 3,\quad 0\, 1\, 2\, 0\, 3\ \ 0\, 1\, 3\, 0\, 2,\quad 0\, 1\, 2\, 0\, 3\, 0\, 1\ \ 3\, 0\, 2\, 3\, 1\, 3\, 0\]
    such that no parameterized square equivalent to any of these three occurs anywhere else in the word.
  \end{remark}
  
  For a word $w\in \Sigma^*$ let us define $\L(w)\in \Sigma^*$ which results by removing all characters
  except for the \emph{last} occurrence of each letter. Note that in the resulting word
  each character of $w$ occurs exactly once, i.e., $\L(w)$ is a \emph{permutation} of $\Alph(w)$.
  We denote the family of permutations of $\Sigma$ by $S_{\Sigma}$.
  Throughout this section we consider permutations as strings over $\Sigma$, i.e., $S_{\Sigma}\subseteq \Sigma^\sigma$.
  For a permutation $\pi \in S_\Sigma$ and a letter $a\in \Sigma$ we define $\ind_{\pi}(a)$ as the 0-based index of $a$ in $\pi$
  \emph{counting from the right}. We extend $\ind_{\pi}$ to arbitrary words $w$ and characters $a\in \Alph(w)$ setting $\ind_{w}(a)=\ind_{\L(w)}(a)$,
  i.e., $\ind_{w}(a)$ is the number of distinct characters after the last occurrence of $a$ in $w$.
  
  For a permutation $\pi \in S_{\Sigma}$ we define an encoding
  $h_{\pi} : \Sigma^* \rightarrow \{0,\ldots,\sigma-1\}^*$ where $h_\pi(w)$ is a word $z$ of length $|w|$
  such that for $i=1,\ldots,|w|$ we have $z[i]=\ind_{\pi w[1..i-1]}(w[i])$.
  Intuitively, $h_\pi(w)$ shows, for each position $i$ of $w$, how many distinct letters are there between
  $w[i]$ and the previous occurrence of the letter $w[i]$ in $w$.
  However, if $w[i]$ occurs for the first time at position $i$, $h_\pi$ uses the word $\pi$
  that is prepended to $w$ to determine the previous occurrence.
  
\begin{example}
We have $\L(abcba)=cba$ and $\L(ababb)=ab$.
For $\pi = abc$ we have
$h_\pi(abcba)=22212$; see Table~\ref{tab:1}. We also have $h_\pi(ababb)=22110$.

\begin{table}[h]
  \begin{center}
    \caption{Intermediate steps of the computation of $h_{\pi}(ababb)=22212$ for $\pi=abc$.}\label{tab:1}
    \footnotesize  
  \begin{tabular}{c|c|c|c|c}
    $i$ & $\pi w[1..i-1]$ & $\L(\pi w[1..i-1])$ & $w[i]$ & $z[i]$\\\hline
      1 & $abc$ & $abc$ & $a$ & 2 \\
      2 & $abca$ & $bca$ & $b$ & 2 \\
      3 & $abcab$ & $cab$ & $c$ & 2 \\
      4 & $abcabc$ & $abc$ & $b$ & 1 \\
      5 & $abcabcb$ & $acb$ & $a$ & 2 
  \end{tabular}
\end{center}
\end{table}
\end{example}

Below, we relate parameterized square prefixes of a word $w$ with ordinary square prefixes of its encodings $h_\pi(w)$.
More precisely, we show that $w$
has a parameterized square prefix of a given length if and only if there exists a permutation $\pi\in S_\Sigma$ such that $h_\pi(w)$ has an (ordinary) square prefix of the same length. We start by listing a few simple properties of the notions introduced above.

\begin{observation}\label{obs:ord}
For every words $u,v,w\in \Sigma^*$ and every bijection $f:\Sigma\to \Sigma$ (extended to a morphism $f:\Sigma^*\to \Sigma^*$), we
have
\begin{enumerate}[(i)]
  \item $\L(uwvw)=\L(uvw)$,
  \item $\L(uvw)=\L(u\L(v)w)$,
  \item $f(\L(u))=\L(f(u))$,
  \item $h_{\pi}(u)=h_{f(\pi)}(f(u))$.
\end{enumerate}
\end{observation}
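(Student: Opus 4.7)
The plan is to verify each of the four identities by a direct analysis of which positions survive the last-occurrence construction~$\L$. In every case I would partition $\Alph$ of the input according to where the last occurrence of each letter lies, and check that both sides of the claimed equality record the same set of positions (relative to the same scaffold string) in the same left-to-right order. This reduces every item to a short bookkeeping argument that I can do ``by inspection.''

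For~(i), the key remark is that every letter of the first copy of $w$ in $uwvw$ also occurs in the second copy, which lies strictly to the right; hence no position inside the first $w$ can be a last occurrence, and deleting that copy leaves $\L$ unchanged. For~(ii), the positions removed when passing from $v$ to $\L(v)$ are by definition those that are not last occurrences within $v$, so they cannot be last occurrences of $uvw$ either; conversely, every letter kept in $\L(v)$ is kept at its original position in $v$, so the last-occurrence positions of $uvw$ and of $u\L(v)w$ coincide up to the trivial contraction of the middle block. For~(iii), since $f$ is a bijection, $a$ has its last occurrence at position $p$ in $u$ if and only if $f(a)$ has its last occurrence at position $p$ in $f(u)$, so $\L$ commutes with $f$.

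Item~(iv) is the only one that chains the previous ones together. Using~(iii), I would write
\[\L(f(\pi)\,f(u[1..i-1]))=\L(f(\pi u[1..i-1]))=f(\L(\pi u[1..i-1]))\]
for every $i$. Because applying a bijection to a permutation preserves the 0-based right-counted index of corresponding letters, this gives
\[\ind_{f(\pi)f(u[1..i-1])}(f(u)[i])=\ind_{\pi u[1..i-1]}(u[i]),\]
which is precisely $h_{f(\pi)}(f(u))[i]=h_\pi(u)[i]$.

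The main (quite mild) obstacle is the bookkeeping in~(i) and~(ii), where one must check not only that the set of surviving positions agrees, but also that their left-to-right order is preserved so that the resulting words are equal and not merely anagrams; beyond this, no machinery past the definitions is required.
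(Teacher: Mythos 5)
Your argument is correct; the paper states this result as an \emph{Observation} and gives no proof at all, treating all four items as immediate consequences of the definitions. Your verification --- tracking which positions survive the last-occurrence filter of $\L$ for (i)--(iii), and for (iv) chaining item (iii) with the morphism property of $f$ and the fact that a letterwise bijection preserves the right-counted index --- supplies exactly the routine bookkeeping the authors left implicit, so it matches the intended (omitted) proof.
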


\begin{lemma}\label{lem:eq}
Let $w\in \Sigma^n$, $\pi \in S_\Sigma$, and $z=h_\pi(w)$.
If $z[i..j]=z[i'..j']$, then $w[i..j]$ and $w[i'..j']$ are equivalent in the parameterized sense.
\end{lemma}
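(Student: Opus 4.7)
The plan is to proceed by induction on $k$ to establish that for each $0 \le k \le j-i+1$ there is a partial bijection $f_k \colon \Alph(w[i..i+k-1]) \to \Alph(w[i'..i'+k-1])$ with $f_k(w[i+m]) = w[i'+m]$ for every $0 \le m < k$. Taking $k = j-i+1$ then yields the bijection that witnesses parameterized equivalence of $w[i..j]$ and $w[i'..j']$; note that $z[i..j]=z[i'..j']$ already forces $j-i = j'-i'$, so the two fragments do have the same length.

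The base case $k=0$ is vacuous. For the inductive step I first apply \cref{obs:ord}(iii) to $f_k$ (extended to a morphism) and obtain $f_k(\L(w[i..i+k-1])) = \L(w[i'..i'+k-1])$, which in particular shows that the two fragments contain the same number $k^{*}$ of distinct letters. The structural fact driving the argument is that, since $\pi \in S_\Sigma$, the word $\L(\pi w[1..i+k-1])$ is a permutation of the whole alphabet $\Sigma$, and its rightmost $k^{*}$ characters form exactly $\L(w[i..i+k-1])$: the last occurrence in $\pi w[1..i+k-1]$ of any letter appearing in the suffix $w[i..i+k-1]$ lies inside that suffix and is therefore more recent than the last occurrence of any letter absent from $w[i..i+k-1]$.

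This lets me split on the value $r := z[i+k] = z[i'+k]$. If $r < k^{*}$, then $w[i+k] \in \Alph(w[i..i+k-1])$ is uniquely recovered as the symbol at 0-based position $r$ from the right of $\L(w[i..i+k-1])$; the analogous description works on the primed side with the same $k^{*}$, so applying $f_k$ identifies $f_k(w[i+k]) = w[i'+k]$ and I set $f_{k+1} := f_k$. If instead $r \ge k^{*}$, then $w[i+k] \notin \Alph(w[i..i+k-1])$ and likewise $w[i'+k] \notin \Alph(w[i'..i'+k-1])$, so $f_{k+1} := f_k \cup \{(w[i+k],\, w[i'+k])\}$ remains a well-defined bijection.

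The main obstacle I expect lies in the bookkeeping that isolates the local role of $w[i..i+k-1]$ inside the global prefix $\pi w[1..i+k-1]$: concretely, one has to verify that $\pi$ is irrelevant to the rightmost $k^{*}$ entries of $\L(\pi w[1..i+k-1])$, and that the quantity $k^{*}$ coincides on the primed and unprimed sides. Both facts follow from the inductive hypothesis together with the observations above, after which the extension from $f_k$ to $f_{k+1}$ is mechanical.
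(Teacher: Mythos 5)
Your proposal is correct and follows essentially the same route as the paper's proof: an induction along the fragment, maintaining a witness bijection, with the same case split on whether $z$'s value at the new position is below or at least the number of distinct letters seen so far (new letter versus repeat identified by its rank among last occurrences). Your write-up is somewhat more explicit than the paper's about why the prefix $\pi$ is irrelevant and why the threshold $k^{*}$ agrees on both sides, but the underlying argument is the same.
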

\begin{proof}
We proceed by induction on the length of the fragments. For length 0 the claim is trivial.
Thus, suppose that it holds for all lengths strictly smaller than $j-i+1$. 
Consequently, $w[i..j-1]$ and $w[i'..j'-1]$ are parameterized equivalent with some witness bijection $f:\Alph(w[i..j-1])\to \Alph(w[i'..j'-1])$.

We consider two cases. First, suppose that
\[z[j]=z[j']\ge|\Alph(w[i..j-1])|=|\Alph(w[i'..j'-1])|.\]
By definition of $h_{\pi}$ this means that $w[j]\notin \Alph(w[i..j-1])$ and $w[j']\notin \Alph(w[i'..j'-1])$.
Consequently, $f$ can be extended with $w[j]\mapsto w[j']$, which yields the witness bijection for equivalence of $w[i..j]$
and $w[i'..j']$.

Next, suppose that
\[z[j]=z[j'] = k < |\Alph(w[i..j-1])|=|\Alph(w[i'..j'-1])|.\]
This means that $w[j]$ is the $k$-th element of $\Alph(w[i..j-1])$ ordered according to the last occurrence in $w[i..j-1]$.
Similarly, $w[j']$ is the $k$-th element of $\Alph(w[i'..j'-1])$ ordered in the same way with respect to $w[i'..j'-1]$.
Since $w[i..j-1]$ and $w[i'..j'-1]$ are parameterized equivalent, the relative positions of these last occurrences are the same.
Hence, $f(w[j])=f(w[j'])$ and $f$ is the witness bijection of equivalence between $w[i..j]$ and $w[i'..j']$.
\end{proof}

\begin{lemma}\label{lem:comp}
Let $v\in \Sigma^*$, $\pi \in S_{\Sigma}$, and let $\pi \odot v = \L(\pi v)$.
For every $w\in \Sigma^*$ we have $h_{\pi}(vw) = h_\pi(v)h_{\pi \odot v}(w)$.
\end{lemma}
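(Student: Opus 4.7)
The plan is to verify the equality position by position, splitting the word $vw$ into the prefix part (positions inside $v$) and the suffix part (positions inside $w$). For each position, I unfold the definition of $h$ and reduce the claim to a statement about $\L$, which follows from \cref{obs:ord}.

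First I would handle positions $1 \le i \le |v|$. Here the $i$-th character of $vw$ is $v[i]$, and the prefix of $vw$ ending just before position $i$ equals $v[1..i-1]$. So by the definition of $h$,
\[
h_{\pi}(vw)[i] = \ind_{\pi\, (vw)[1..i-1]}((vw)[i]) = \ind_{\pi\, v[1..i-1]}(v[i]) = h_{\pi}(v)[i],
\]
which matches the $i$-th symbol of $h_\pi(v)h_{\pi\odot v}(w)$.

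Next I would handle positions of the form $|v|+j$ for $1 \le j \le |w|$. Here $(vw)[|v|+j] = w[j]$ and the prefix of $vw$ ending just before this position is $v \cdot w[1..j-1]$, so
\[
h_{\pi}(vw)[|v|+j] = \ind_{\pi v\, w[1..j-1]}(w[j]).
\]
On the other side, I need this to equal
\[
h_{\pi\odot v}(w)[j] = \ind_{(\pi\odot v)\, w[1..j-1]}(w[j]) = \ind_{\L(\pi v)\, w[1..j-1]}(w[j]).
\]
Since $\ind_u(a)$ depends on $u$ only through $\L(u)$ (by the very definition $\ind_u(a) = \ind_{\L(u)}(a)$), it suffices to show
\[
\L\bigl(\pi v\, w[1..j-1]\bigr) \;=\; \L\bigl(\L(\pi v)\, w[1..j-1]\bigr),
\]
which is exactly the instance of \cref{obs:ord}(ii) obtained by taking the first argument empty, middle argument $\pi v$, and last argument $w[1..j-1]$.

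Combining the two cases gives the claimed identity. I do not foresee any real obstacle here; the only subtlety is keeping the indexing consistent (remembering that $\pi$ is prepended but not counted as part of the output, and that the prefix of $vw$ used at position $|v|+j$ is $v \cdot w[1..j-1]$, not $w[1..j-1]$ alone). Everything else is a routine unfolding of definitions combined with the single lemma that $\L$ absorbs internal applications of $\L$.
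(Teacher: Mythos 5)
Your proof is correct and follows essentially the same route as the paper: a position-by-position check, with the prefix positions being immediate and the suffix positions reduced via $\ind_u(a)=\ind_{\L(u)}(a)$ to the identity $\L(\pi v\,w[1..j-1])=\L(\L(\pi v)\,w[1..j-1])$, which is \cref{obs:ord}(ii) with an empty first argument. Only the indexing convention differs ($j$ versus $i-|v|$).
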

\begin{proof}
Let us consider a position $i$ of $vw$. If $i\le |v|$, we clearly have 
\[(h_{\pi}(vw))[i]=\ind_{\pi (vw)[1..i-1]}((vw)[i])=\ind_{\pi v[1..i-1]}(v[i])=(h_{\pi}(v))[i]\]
since $v[1..i]=(vw)[1..i]$. Thus, let us consider $i > |v|$.
Then, we have
\begin{align*}
  (h_{\pi}(vw))[i] &= \ind_{\pi(vw)[1..i-1]}((vw)[i])=\ind_{\pi vw[1..i-|v|-1]}(w[i-|v|])=\\
                   &= \ind_{(\pi \odot v) w[1..i-|v|-1]}(w[i-|v|]) = (h_{\pi \odot v}(w))[i-|v|]
\end{align*}
since \[\L(\pi vw[1..i-|v|-1])=\L(\L(\pi v)w[1..i-|v|-1])=\L((\pi \odot v) w[1..i-|v|-1])\] by \cref{obs:ord}(ii).
\end{proof}

\begin{lemma}\label{lem:pi}
Let $w\in \Sigma^*$ be a parameterized square. There exists $\pi \in S_\Sigma$ such that $h_{\pi}(w)$
is an ordinary square. 
\end{lemma}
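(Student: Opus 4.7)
The plan is to find a permutation $\pi \in S_\Sigma$ satisfying $\tilde{f}(\pi) = \pi \odot u$, where $\tilde{f} \colon \Sigma \to \Sigma$ is a bijective extension of the witness $f \colon \Alph(u) \to \Alph(v)$ of $v = f(u)$. Given such $\pi$, Lemma \ref{lem:comp} expresses $h_\pi(uv) = h_\pi(u) \cdot h_{\pi \odot u}(v)$, and then $v = \tilde{f}(u)$ combined with Observation \ref{obs:ord}(iv) yields
\[
  h_{\pi \odot u}(v) = h_{\tilde{f}(\pi)}(\tilde{f}(u)) = h_\pi(u),
\]
so $h_\pi(uv) = h_\pi(u) \cdot h_\pi(u)$ is an ordinary square.

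To construct $\pi$, I would partition $\Sigma$ into $A = \Alph(u) \cap \Alph(v)$, $B = \Alph(u) \setminus \Alph(v)$, $C = \Alph(v) \setminus \Alph(u)$, and $D = \Sigma \setminus (\Alph(u) \cup \Alph(v))$; since $|B| = |C|$, I can extend $f$ to a bijection $\tilde{f}$ that acts as $f$ on $\Alph(u)$, as the identity on $D$, and as some fixed bijection $C \to B$. Setting $\tau := \tilde{f}^{-1}(\L(u))$, the choice of $\tilde{f}$ forces the letters of $\tau$ to lie in $f^{-1}(A) \sqcup C$; let $\tau_C$ denote its subsequence over $C$. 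My candidate is
\[
  \pi \;=\; \rho_D \cdot \tilde{f}^{-1}(\tau_C) \cdot \tau,
\]
where $\rho_D$ is any permutation of $D$. A short count confirms that the three blocks lie in the pairwise-disjoint sets $D$, $\tilde{f}^{-1}(C) = f^{-1}(C) \subseteq A \sqcup B$, and $f^{-1}(A) \sqcup C$, whose union is $\Sigma$, so $\pi$ is indeed a permutation of $\Sigma$.

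To verify $\tilde{f}(\pi) = \pi \odot u$, I would expand both sides block by block. Using the general identity $\pi \odot u = \L(\pi u) = \bigl(\text{subsequence of } \pi \text{ in } C \sqcup D\bigr) \cdot \L(u)$, the right-hand side works out to $\rho_D \cdot \tau_C \cdot \L(u)$ (the middle block of $\pi$, being in $A \sqcup B$, contributes nothing to the $C \sqcup D$-subsequence). For the left-hand side, $\tilde{f}$ fixes the $\rho_D$-block pointwise (since $\tilde{f}|_D = \mathrm{id}$), sends the middle block to $\tau_C$ by the definition of that block, and carries $\tau$ to $\L(u)$. The three block equalities then match.

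I expect the main obstacle to be organizational rather than technical: one has to track how the four classes $A$, $B$, $C$, $D$ interact with $\tilde{f}$ and $\tilde{f}^{-1}$ carefully enough that the three blocks of $\pi$ genuinely partition $\Sigma$ and align on both sides of the target equation. Once the partition and the tailored extension $\tilde{f}$ are committed to, the verification is a direct inspection.
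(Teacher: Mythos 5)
Your proposal is correct, and it follows the paper's high-level strategy exactly: reduce the problem to finding $\pi$ with $\pi\odot u = \tilde f(\pi)$, then combine \cref{lem:comp} with \cref{obs:ord}(iv) to get $h_\pi(u)=h_{\pi\odot u}(v)$. Where you genuinely diverge is in the construction of $\pi$. The paper extends $f$ the same way you do (identity outside $\Alph(u)\cup\Alph(v)$, an arbitrary bijection $\Alph(v)\setminus\Alph(u)\to\Alph(u)\setminus\Alph(v)$), but then takes $r$ to be the order of the extended bijection and sets $\pi=\L(\rho\, f^{0}(u)f^{1}(u)\cdots f^{r-1}(u))$; the identity $\pi\odot u=f(\pi)$ then falls out of a short chain of applications of \cref{obs:ord}(i)--(iii), essentially by cyclically shifting the orbit $u, f(u),\dots,f^{r-1}(u)$. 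Your three-block candidate $\pi=\rho_D\cdot\tilde f^{-1}(\tau_C)\cdot\tau$ with $\tau=\tilde f^{-1}(\L(u))$ avoids any mention of the order of $\tilde f$ and instead verifies the key identity by a direct comparison of subsequences over the partition $A,B,C,D$; I checked that the bookkeeping goes through (the union $D\sqcup f^{-1}(C)\sqcup f^{-1}(A)\sqcup C$ is indeed all of $\Sigma$, and both sides of the target equation reduce to $\rho_D\cdot\tau_C\cdot\L(u)$). The paper's orbit construction is shorter to verify once \cref{obs:ord} is in hand; yours is more explicit about what $\pi$ actually looks like and makes the role of each letter class transparent, at the cost of the case analysis over $A,B,C,D$ that you correctly flag as the organizational burden.
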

\begin{proof}
Let $w=uv$ be the decomposition into halves and let $f:\Alph(u)\to \Alph(v)$ be the witness bijection
of the parameterized equivalence of $u$ and $v$.
By \cref{lem:comp}, we have $h_{\pi}(w)=h_{\pi}(u)h_{\pi \odot u}(v)$.
We shall choose $\pi$ so that $h_{\pi}(u)=h_{\pi \odot u}(v)$. 

Let us extend $f$ to a bijection
$f : \Sigma \to \Sigma$ using identity on $\Sigma\setminus (\Sigma_u\cup \Sigma_v)$ and
an arbitrary bijection $(\Sigma_v \setminus \Sigma_u)\to (\Sigma_u\setminus \Sigma_v))$
where $\Sigma_u = \Alph(u)$ and $\Sigma_v=\Alph(v)$.

Let $r$ be the rank of $f$, i.e., the smallest positive integer such that $f^r = \id$,
and let $\rho\in S_{\Sigma\setminus (\Alph(u)\cup \Alph(v))}$ be an arbitrary permutation.
We claim that $\pi = \L(\rho f^{0}(u)f^{1}(u)\cdots f^{r-1}(u))$ is a permutation of $\Sigma$ satisfying $\pi \odot u = f(\pi)$.
First, note that $f^0(u)=u$ and $f^1(u)=v$, so $\pi\in S_{\Sigma}$.
Next, we apply \cref{obs:ord}:
\begin{multline*}
  \pi \odot u = \L(\pi u) = \L(\L(\rho f^{0}(u)f^{1}(u)\cdots f^{r-1}(u))u)
               \,\stackrel{\mathclap{\mbox{\tiny\ref{obs:ord}(ii)}}}{=}\, \L(\rho uf^{1}(u)\cdots f^{r-1}(u)u)\\
               \,\stackrel{\mathclap{\mbox{\tiny\ref{obs:ord}(i)}}}{=}\, \L(\rho f^{1}(u)\cdots f^{r-1}(u)u)
               = \L(f(\rho) f^{1}(u)\cdots f^{r-1}(u)f^r(u))\\
               = \L(f(\rho f^{0}(u)f^{1}(u)\cdots f^{r-1}(u)))
               \,\stackrel{\mathclap{\mbox{\tiny\ref{obs:ord}(iii)}}}{=}\, f(\L(\rho f^{0}(u)f^{1}(u)\cdots f^{r-1}(u)))
               = f(\pi).
\end{multline*}
Finally, using \cref{obs:ord}(iv), we conclude that $h_{\pi}(u)=h_{\pi \odot u}(v)$ since $v=f(u)$ and $\pi \odot u = f(\pi)$.
\end{proof}
 
 Next, we shall apply the following standard result to prove its counterpart for parameterized equivalence.
  \begin{fact}[\cite{fraenkel-simpson}]\label{fct:fraenkel}
  A word $w\in \Sigma^*$ has at most two prefixes which are ordinary squares without another occurrence in $w$.
  \end{fact}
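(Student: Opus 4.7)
The plan is to argue by contradiction, assuming that $u_1u_1$, $u_2u_2$, $u_3u_3$ are three distinct prefixes of $w$ with $|u_1|<|u_2|<|u_3|$, each an ordinary square that has no other occurrence in $w$.

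The first step shows that the lengths cannot be spread too widely: I claim $|u_j|<2|u_i|$ for every $i<j$. If instead $|u_j|\ge 2|u_i|$, then $u_iu_i$ is a prefix of $u_j$ (as a prefix of $u_ju_j$ of length at most $|u_j|$), so by the trivial period $|u_j|$ of $u_ju_j$ the word $u_iu_i$ reappears starting at position $|u_j|+1$, contradicting the uniqueness of its occurrence. Hence we are reduced to the tight regime $|u_1|<|u_2|<|u_3|<2|u_1|$.

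The second step applies Fine and Wilf's periodicity theorem to the prefix of $w$ of length $2|u_2|$, which carries both periods $|u_1|$ and $|u_2|$. The hypothesis $2|u_2|\ge|u_1|+|u_2|-\gcd(|u_1|,|u_2|)$ reduces to $|u_2|+\gcd(|u_1|,|u_2|)\ge|u_1|$, which is immediate from $|u_2|\ge|u_1|$. The theorem therefore yields period $d=\gcd(|u_1|,|u_2|)$ on this prefix, so $u_1$ and $u_2$ are both powers of a common length-$d$ word; since the ratio $|u_2|/|u_1|$ lies strictly between $1$ and $2$ and so is not an integer, $d$ must be a proper divisor of $|u_1|$.

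The final step combines this $d$-periodicity of the length-$2|u_2|$ prefix with the $|u_3|$-periodicity of $u_3u_3$ to exhibit a second occurrence of some $u_iu_i$, yielding the desired contradiction. I expect this to be the main obstacle: one must carefully partition positions of $u_3u_3$ into those covered by the $d$-periodicity (guaranteed only within the first $2|u_2|$ positions) and those covered by the $|u_3|$-periodicity, and then show in each subcase that the two constraints together force an additional occurrence of $u_1u_1$ or $u_2u_2$. Alternatively, one can bypass this detailed analysis by citing the classical \emph{three squares lemma} of Fraenkel and Simpson to conclude directly that $|u_3|\ge|u_1|+|u_2|>2|u_1|$, in immediate conflict with the tight regime.
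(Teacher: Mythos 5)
The paper offers no proof of this fact at all: it is quoted verbatim as a known result of Fraenkel and Simpson, so the only ``comparison'' possible is with the cited literature. Judged on its own merits, your attempt has a correct first step (if $|u_j|\ge 2|u_i|$ then the period $|u_j|$ of $u_ju_j$ copies the prefix $u_iu_i$ to position $|u_j|+1$, giving a second occurrence), but the remainder has genuine gaps. In the second step you apply Fine and Wilf to the prefix of length $2|u_2|$, claiming it ``carries both periods $|u_1|$ and $|u_2|$''; that is unjustified. The period $|u_1|$ is guaranteed only on the prefix of length $2|u_1|$, and nothing forces it to persist on positions $2|u_1|+1,\dots,2|u_2|$. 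Applying Fine and Wilf where both periods actually hold, namely on the length-$2|u_1|$ prefix, requires $2|u_1|\ge |u_1|+|u_2|-\gcd(|u_1|,|u_2|)$, i.e.\ $|u_2|\le |u_1|+\gcd(|u_1|,|u_2|)$, which does not follow from $|u_2|<2|u_1|$ (take $|u_1|=5$, $|u_2|=9$). So the claimed $d$-periodicity, and everything built on it, does not go through.

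The third step is the actual mathematical content of the statement, and you explicitly leave it open; what you describe as ``the main obstacle'' is precisely the synchronization argument that constitutes the three-squares lemma of Crochemore--Rytter and the counting argument of Fraenkel--Simpson, so the proof is not complete. The proposed shortcut of simply citing that lemma also has two problems: first, it is essentially circular here, since the Fact is attributed to Fraenkel and Simpson exactly for this argument; second, the three-squares lemma requires the shortest root $u_1$ to be \emph{primitive} (for $u_1=aa$, $u_2=aaa$, $u_3=aaaa$ one has $|u_3|<|u_1|+|u_2|$), and you do not treat the imprimitive case. In that case one must argue separately that $u_1u_1$ recurs inside $u_3u_3$, which again is not immediate because the short period of $u_1u_1$ need not extend beyond position $2|u_1|$. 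In short: the reduction to $|u_1|<|u_2|<|u_3|<2|u_1|$ is sound, but the contradiction in that regime is never actually derived.
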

  
\begin{lemma}\label{lem:factorial}
A word $w\in \Sigma^*$ has at most $2\sigma!$ prefixes which are parameterized squares
without another (parameterized) occurrence in $w$.
\end{lemma}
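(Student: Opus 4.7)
The plan is to reduce the counting of parameterized-square prefixes with no other parameterized occurrence to a counting of ordinary-square prefixes with no other occurrence, done separately inside each of the $\sigma!$ encoded words $z_\pi = h_\pi(w)$ for $\pi \in S_\Sigma$. Since \cref{fct:fraenkel} caps the latter count by $2$ for any fixed $\pi$, summation yields at most $2\sigma!$ bad prefixes overall.

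More precisely, for every prefix $u$ of $w$ that is a parameterized square with no other parameterized occurrence in $w$, I would invoke \cref{lem:pi} to pick some $\pi_u \in S_\Sigma$ such that $h_{\pi_u}(u)$ is an ordinary square. By \cref{lem:comp} applied with $v = u$ (and any completion), $h_{\pi_u}(u)$ is exactly $z_{\pi_u}[1..|u|]$. The key step is to check that this prefix has no other occurrence in $z_{\pi_u}$: if $z_{\pi_u}[i..i+|u|-1] = z_{\pi_u}[1..|u|]$ with $i > 1$, then \cref{lem:eq} would give a parameterized occurrence of $u$ at position $i$ in $w$, contradicting the hypothesis on $u$. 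Hence $h_{\pi_u}(u)$ is an ordinary-square prefix of $z_{\pi_u}$ with no other occurrence in $z_{\pi_u}$.

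To finish, assign each bad prefix $u$ the pair $(\pi_u, |u|)$. Distinct bad prefixes of $w$ have distinct lengths (prefixes of equal length are equal), so the number of bad prefixes mapped to a fixed $\pi$ equals the number of lengths $\ell$ for which $z_\pi[1..\ell]$ is an ordinary square with no other occurrence in $z_\pi$. By \cref{fct:fraenkel}, this number is at most $2$. Summing over all $\pi \in S_\Sigma$ yields the claimed bound $2\sigma!$.

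The main obstacle is that \cref{lem:eq} is only one-directional: equal encoded fragments imply parameterized equivalence of the originals, but not conversely. So a parameterized occurrence of $u$ elsewhere in $w$ is not guaranteed to produce an equal fragment in $z_{\pi_u}$. Fortunately the argument only uses the direction that is available, namely to preclude other occurrences of $h_{\pi_u}(u)$ in $z_{\pi_u}$; this asymmetry is what forces us to take a union over all $\pi \in S_\Sigma$ (rather than using just one encoding) and explains the factorial factor in the final bound.
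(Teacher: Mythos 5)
Your proof is correct and follows essentially the same route as the paper's: apply \cref{lem:pi} to map each such prefix to an ordinary-square prefix of some $h_\pi(w)$, use \cref{lem:eq} to rule out later occurrences in that encoding, and invoke \cref{fct:fraenkel} once per permutation $\pi\in S_\Sigma$. Your write-up is in fact somewhat more careful than the paper's (e.g., making the injection $u\mapsto(\pi_u,|u|)$ and the use of \cref{lem:comp} explicit), but the argument is the same.
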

\begin{proof}
By \cref{lem:pi} for every prefix of $w$ being a parameterized square there is a permutation $\pi\in S_{\Sigma}$
such that the corresponding prefix of $h_\pi(w)$ is an ordinary square. Fact~\ref{fct:fraenkel} implies
that for a fixed $\pi$ at most two such ordinary squares do not occur later in $h_{\pi(w)}$. However,
by \cref{lem:eq}, such a later occurrence in $h_{\pi(w)}$ means that the prefix of $w$ has
another (parameterized) occurrence in $w$. Combining these results yields an upper bound of $2|S_{\Sigma}|=2\sigma!$
on the number of prefixes being parameterized squares
without another (parameterized) occurrence in $w$.
\end{proof}

\Cref{lem:factorial} immediately yields a bound for the parameterized squares.

\begin{theorem}
For every positive integers $n$ and $\sigma$ we have 
    $\SQPPARAM(n,\sigma) \le 2\sigma!n$ 
    and $\SQPARAM(n,\sigma)\le 2(\sigma!)^2n$.
\end{theorem}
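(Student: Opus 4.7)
The plan is to mimic the classical Fraenkel--Simpson argument (which bounds the number of distinct ordinary squares by $2n$), using \cref{lem:factorial} in place of \cref{fct:fraenkel}. For the first inequality, I would iterate over all $n$ starting positions $i\in\{1,\ldots,n\}$ and, for each one, apply \cref{lem:factorial} to the suffix $w[i..n]$. This immediately yields at most $2\sigma!$ prefixes of $w[i..n]$ that are parameterized squares having no other parameterized occurrence inside $w[i..n]$.

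To convert this into a bound on $\SQPPARAM(w)$, I would use the standard rightmost-occurrence argument: for every parameterized-equivalence class of squares present in $w$, there is a unique rightmost starting position $i^\star$, and the occurrence at $i^\star$, regarded as a prefix of $w[i^\star..n]$, has no parameterized-equivalent occurrence at any later position of $w[i^\star..n]$, because by maximality of $i^\star$ there is none even in all of $w$. Consequently each class is charged to exactly one starting position, and summing \cref{lem:factorial} over all positions gives
\[\SQPPARAM(w)\le \sum_{i=1}^n 2\sigma! = 2\sigma!\, n,\]
which proves the first inequality after taking the supremum over $w\in\Sigma^n$.

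For the bound on $\SQPARAM(n,\sigma)$ I would pass from parameterized-equivalence classes to distinct words. Fix any representative $u$ of a class; every other word in the class is obtained from $u$ by applying an injection $f:\Alph(u)\to\Sigma$ letter by letter. Since $|\Alph(u)|\le\sigma$, there are at most $\sigma(\sigma-1)\cdots 1=\sigma!$ such injections, hence at most $\sigma!$ distinct words per class. Combining with the first bound gives
\[\SQPARAM(n,\sigma)\le \sigma!\cdot\SQPPARAM(n,\sigma)\le 2(\sigma!)^2 n.\]

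I do not foresee any real obstacle: the Fraenkel--Simpson style reduction is routine once \cref{lem:factorial} is in hand, and the passage from equivalence classes to distinct words is just a multiplication by the size of an orbit under relabeling. The only point worth spelling out is that ``no later parameterized occurrence in $w[i^\star..n]$'' follows from rightmost-ness of $i^\star$ in $w$, which is immediate since any such later occurrence inside the suffix would also be a later occurrence inside $w$.
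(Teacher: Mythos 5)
Your proposal is correct and follows essentially the same route as the paper: apply \cref{lem:factorial} to each suffix, charge each parameterized-equivalence class to its rightmost starting position to get $2\sigma!\,n$, and then multiply by the size (at most $\sigma!$) of each equivalence class to bound the number of squares distinct as words. The only cosmetic difference is that the paper phrases the per-position charging in terms of leftmost occurrences, but the counting is identical.
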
 
\begin{proof}
First, let us count parameterized squares up to equivalence.
\Cref{lem:factorial} implies that at most $2\sigma!$ parameterized squares have their leftmost occurrence
at any given position, which gives $2\sigma!n$ non-equivalent parameterized squares in total.

To count parameterized squares up to equality of subwords, it suffices to observe that
every class of parameterized equivalence has at most $\sigma!$ elements (the class $[u]$ has exactly $|\Alph(u)|!$ elements).
Hence, the number of parameterized squares distinct as subwords is at most $2(\sigma!)^2n$. 
\end{proof}

\section{Infinite Words Avoiding Nonstandard Squares and Cubes}\label{sec:infinite}
  Let us recall that there exist infinite ternary words that avoid ordinary squares \cite{Thue}
  (and obviously there is no such binary word).
  It is also known that there are infinite words over a 4-letter alphabet avoiding Abelian squares
  while over 3-letter alphabets such words do not exist \cite{DBLP:conf/icalp/Keranen92}.
  Here, we investigate an analogous problem for other nonstandard repetitions.

\subsection{Avoiding Order-Preserving Squares}
  We say that a word is op-square-free if it does not contain an order-preserving square
  of length greater than 2.
  Let $\Sigma_3=\{0,1,2\}$ ordered in the natural way. Consider the morphism:
    \[ \lleft\ \;:\; 0\mapsto 10,\, 1\mapsto 11,\, 2\mapsto 12.\]
  It satisfies the following property, which lets us construct an op-square-free word. 

    \begin{observation}\label{obs:simple}
      For every symbols $a,b,c \in \Sigma_3$ we have:
      \begin{enumerate}[(i)]
        \item $1\,a\approx 1\,b\: \Leftrightarrow\: a=b$;
        \item $a\,1\,b\approx 1\,c\,1 \:\Rightarrow\: a=b$.
      \end{enumerate}
    \end{observation}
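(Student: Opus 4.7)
My plan is to unfold the definition of order-preserving equivalence ($\approx$) and use elementary properties of strictly increasing bijections in each case.

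For part (i), the plan is as follows. Suppose $1a\approx 1b$, witnessed by a strictly increasing bijection $f:\Alph(1a)\to\Alph(1b)$. Comparing the first positions gives $f(1)=1$, and comparing the second positions gives $f(a)=b$. Now I would split on the value of $a\in\Sigma_3$: if $a=1$, then $b=f(1)=1=a$; if $a=0$, then $a<1$ forces $b=f(a)<f(1)=1$, and since $b\in\Sigma_3$ the only possibility is $b=0=a$; the case $a=2$ is symmetric. The reverse implication is trivial. This case analysis is essentially forced by the fact that $\Sigma_3$ has only three letters, so each inequality $f(a)<1$ or $f(a)>1$ pins down $b$ uniquely.

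For part (ii), the argument is even more direct and uses only that $f$ is a \emph{bijection}. If $a1b\approx 1c1$ via some bijection $f:\Alph(a1b)\to\Alph(1c1)$, then reading off the three positions yields $f(a)=1$, $f(1)=c$, and $f(b)=1$. From $f(a)=f(b)$ and injectivity of $f$, I conclude $a=b$. Note that the strictly-increasing hypothesis is not even needed here.

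I do not foresee any obstacle; both parts are immediate consequences of the definition of $\approx$, with (i) relying on order preservation plus the smallness of $\Sigma_3$, and (ii) relying only on injectivity. The only thing to be careful about is to verify in (i) that the two alphabets $\Alph(1a)$ and $\Alph(1b)$ indeed have equal size, which is automatic once $f$ is produced as a bijection between them.
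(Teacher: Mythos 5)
Your proof is correct and is the standard unfolding of the definition of order-preserving equivalence; the paper itself states this as an observation without proof, treating exactly this reasoning as immediate. Both the case analysis in (i) using monotonicity of $f$ together with $f(1)=1$, and the injectivity argument in (ii), check out.
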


  \begin{lemma}\label{lem:left}
    If a word $w\in \Sigma_3^*$ is square-free, then $\lleft(w)$ is op-square-free.
  \end{lemma}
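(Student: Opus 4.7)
The plan is to argue by contradiction: assume that $\lleft(w)$ contains an op-square $uv$ with $n=|u|\ge 2$ starting at some position $p$ of $\lleft(w)$, and derive an ordinary square in $w$. The key structural fact is that $\lleft(w)$ carries the letter $1$ at every odd position and a $w$-letter at every even position, since each image $\lleft(w[i])\in\{10,11,12\}$ begins with~$1$. Thus, once $p$ and $n$ are fixed, the pattern of $1$'s inside $u$ and inside $v$ is completely determined by the parities of $p$ and of~$n$. I split the analysis by the parity of~$n$; the parity of~$p$ plays only a bookkeeping role.

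\emph{When $n$ is even,} $u$ and $v$ start at positions of the same parity, so $u[j]=1$ iff $v[j]=1$ for every $j\in\{1,\ldots,n\}$ and, in particular, $1\in\Alph(u)\cap\Alph(v)$. Any witness order-preserving bijection $f\colon\Alph(u)\to\Alph(v)$ must therefore send $1$ to $1$; since $\Alph(u),\Alph(v)\subseteq\Sigma_3=\{0,1,2\}$ and $f$ is strictly increasing, fixing the middle element forces $f(0)=0$ and $f(2)=2$ whenever these letters appear, so $f=\id$ and $u=v$. Reading off the $w$-letters at the even positions of $u=v$ yields $w[i+j]=w[i+n/2+j]$ for $j=0,\ldots,n/2-1$, which is an ordinary square of length $n\ge 2$ in $w$ and contradicts square-freeness.

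\emph{When $n$ is odd,} $u$ and $v$ start at positions of opposite parity, so $n\ge 3$ and the length-$3$ prefixes $u[1..3]$ and $v[1..3]$ have, in some order, the shapes $1c1$ and $a1b$ with $a,b,c\in\Sigma_3$. The op-equivalence of $uv$ restricts to an op-equivalence of these prefixes, whence \cref{obs:simple}(ii) immediately gives $a=b$. But $a$ and $b$ are two consecutive letters of $w$, so $w$ contains the length-$2$ square $aa$, again contradicting square-freeness. The only step that is not a direct appeal to \cref{obs:simple} is the deduction $f(1)=1\Rightarrow f=\id$ in the even case; it uses nothing beyond $|\Sigma_3|=3$ and $f$ being strictly increasing, and this is precisely where the specific choice of the morphism $\lleft$ (with $1$ sandwiched between $0$ and $2$) is essential.
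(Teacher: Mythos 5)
Your proof is correct and takes essentially the same route as the paper: in the even-length case you reduce to an ordinary square in $w$ via the forced $1$'s at aligned positions (the paper packages the step ``$f(1)=1$ implies $f=\id$'' as \cref{obs:simple}(i) and handles an even starting position by shifting the square one step to the left, whereas you derive $f=\id$ directly from monotonicity on $\{0,1,2\}$, which lets you avoid the case split on the parity of $p$), and your odd-length case is exactly the paper's appeal to \cref{obs:simple}(ii). One small nitpick: parity alignment alone only forces $u[j]=v[j]=1$ at the indices $j$ corresponding to odd positions of $\lleft(w)$, not the blanket claim that $u[j]=1$ iff $v[j]=1$ for every $j$ (even positions carry arbitrary $w$-letters, possibly equal to $1$); but since a single index with $u[j]=v[j]=1$ already gives $f(1)=1$, and such an index exists because $n\ge 2$, this does not affect the validity of the argument.
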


  \begin{proof}
    Let $\approx$ denote the order-preserving equivalence
    (i.e., $u \approx v$ if $|u|=|v|$ and $uv$ is an order-preserving square).

    \noindent
    Suppose to the contrary that $w'=\lleft(w)$ contains an order-preserving square
    $u'v'=w'[i..i+2k-1]$, with $|u'|=|v'|=k \ge 2$.
    We consider four cases depending on the parity of $i$ and $k$.

    If $2 \mid k$ and $2 \nmid i$, then $u'$ and $v'$ start with a 1 and every second symbol of each of them is a 1.
    Consequently, by \cref{obs:simple}(i), $u'=v'$.
    Moreover, in this case we have $u'=\lleft(u)$ and $v'=\lleft(v)$ for some subword $uv$ of $w$.
    Hence, $uv$ is a square in $w$, a contradiction.

    If $2 \mid k$ and $2 \mid i$, then $w'[i-1...i+2k-2]$ is also an order-preserving square.
    The conclusion follows from the previous case.

    If $2 \nmid k$ and $2 \nmid i$, then $u'$ and $v'$ start with $1c1$ and $a1b$ for some
    $a,b,c \in \Sigma_3$, respectively.
    By \cref{obs:simple}(ii), we conclude that $a=b$, which implies a square $ab$ in $w$, a contradiction.

    The final case, $2 \nmid k$ and $2 \mid i$, also implies a 2-letter square in $w$
    just as in the previous case.
    This completes the proof that $w'$ is op-square-free.
  \end{proof}

  \noindent
  We apply \cref{lem:left} to all prefixes of an infinite square-free word \cite{Thue} over a ternary alphabet
  and obtain the following result.

  \begin{theorem}
    There exists an infinite op-square-free word over 3-letter alphabet.
  \end{theorem}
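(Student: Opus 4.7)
The plan is to lift \cref{lem:left}, which is stated for finite words, to the infinite setting by a standard prefix-limit argument. Concretely, I would start by invoking Thue's classical result to fix an infinite square-free word $w=w[1]w[2]w[3]\cdots$ over the ternary alphabet $\Sigma_3$, and then extend the morphism $\lleft$ to infinite words in the obvious way, setting $\lleft(w)=\lleft(w[1])\lleft(w[2])\lleft(w[3])\cdots$. Since each $\lleft(a)$ has length $2$, the word $\lleft(w)$ is infinite.

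Next I would verify that $\lleft(w)$ is op-square-free. Suppose to the contrary that it contains an order-preserving square $u'v'$ of length $2k \ge 4$ starting at some position $i$. Because this occurrence is finite, there exists a sufficiently large $N$ such that the supposed order-preserving square is entirely contained in $\lleft(w[1..N])$. Since $w$ is square-free, so is its prefix $w[1..N]$; \cref{lem:left} therefore guarantees that $\lleft(w[1..N])$ is op-square-free, contradicting the assumed presence of $u'v'$.

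Thus $\lleft(w)$ is the desired infinite op-square-free word over $\Sigma_3$. No genuine obstacle arises: the core combinatorial content is already packaged inside \cref{lem:left}, and the only thing left is the soft observation that op-square-freeness is a local property, so freeness on all prefixes implies freeness of the infinite word. The proof is therefore a one-line application of \cref{lem:left} together with Thue's infinite ternary square-free word.
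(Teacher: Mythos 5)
Your proposal is correct and matches the paper's argument exactly: the paper likewise obtains the theorem by applying \cref{lem:left} to all prefixes of Thue's infinite ternary square-free word, relying on the fact that op-square-freeness is a local (prefix-checkable) property. You have merely spelled out the compactness step that the paper leaves implicit.
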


  \begin{example}
    If we apply the morphism $0\mapsto 10,\, 1\mapsto 11,\, 2\mapsto 12$ to the infinite square-free word that starts with
    \begin{center}
      012021012102012021\,$\cdots$
    \end{center}
    we obtain an op-square-free word that starts with:
    \begin{center}
      10\,11\,12\,10\,12\,11\,10\,11\,12\,11\,10\,12\,10\,11\,12\,10\,12\,11\,$\cdots$
    \end{center}
  \end{example}

\subsection{Avoiding Parameterized Cubes}
  A parameterized cube is a word $uvw$ such that both $uv$ and $vw$ are parameterized squares.
  A word is called parameterized-square-free (parameterized-cube-free)
  if it does not contain parameterized squares (parameterized cubes) of length
  greater than 3.
  We show that there is no infinite parameterized-square-free word,
  and we construct a binary parameterized-cube-free word.

  \begin{theorem}
    There is no infinite parameterized-square-free word.
  \end{theorem}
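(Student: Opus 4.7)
My plan is to bound the length of any parameterized-square-free word via a short structural analysis driven by length-$4$ avoidance alone. The key observation is that a word of length $2$ has exactly two parameterized equivalence classes: the \emph{equal} class (like $aa$) and the \emph{distinct} class (like $ab$), since the witness bijection must preserve the number of distinct symbols. Hence, writing $b_i=1$ if $w[i]=w[i+1]$ and $b_i=0$ otherwise, the fragment $w[i..i+3]$ is a parameterized square if and only if $b_i=b_{i+2}$. Avoiding all length-$4$ parameterized squares therefore forces $b_i\neq b_{i+2}$ for every valid $i$, so the odd-indexed and even-indexed subsequences of $(b_i)$ each alternate in $\{0,1\}$. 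In particular the sequence $b$ is periodic of period~$4$ and is determined by the pair $(b_1,b_2)\in\{0,1\}^2$.

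I would then split into the four cases for $(b_1,b_2)$. In each case, the pattern of equalities and inequalities encoded by $b$ forces the block structure of $w$: its blocks have lengths in $\{1,2,3\}$ arranged in a fixed periodic layout, and consecutive blocks carry pairwise distinct letters. Within the first ten positions, a length-$8$ window of $w$ appears whose two halves have the form $(x,y,y,y)$ and $(z,t,t,t)$ for the cases $(b_1,b_2)\in\{(0,0),(0,1),(1,0)\}$, or $(x,x,x,y)$ and $(z,z,z,t)$ for $(b_1,b_2)=(1,1)$; in both shapes $x\neq y$ and $z\neq t$ hold automatically because these symbols come from distinct adjacent blocks. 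Concretely one can take $w[1..8]$ when $(b_1,b_2)\in\{(0,1),(1,1)\}$, $w[2..9]$ when $(b_1,b_2)=(0,0)$, and $w[3..10]$ when $(b_1,b_2)=(1,0)$.

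Every such length-$8$ window is a parameterized square: both halves share the same equality pattern, and the map $x\mapsto z$, $y\mapsto t$ is a bijection precisely because $x\neq y$ and $z\neq t$. This contradicts parameterized-square-freeness for any word of length at least~$10$, ruling out infinite examples. The only mildly tedious part is the four-case bookkeeping, but the block structure is so rigid that the appropriate window is immediately visible in each case.
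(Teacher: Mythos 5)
Your proof is correct; I checked each of the four cases and the claimed length-$8$ windows are indeed parameterized squares. The underlying engine is the same as in the paper's proof --- both arguments use only the fact that a length-$4$ fragment is a parameterized square exactly when its two halves are both constant or both non-constant, and both culminate in a forced length-$8$ parameterized square whose halves have the shape ``three equal letters and one different one.'' What you do differently is the bookkeeping: the paper first argues that every suffix contains $aa$ (since $abcd$ with $a\ne b$, $c\ne d$ is a parameterized square), then that some $aaa$ occurs (via the pattern $abbd$), and finally extends $aaa$ forward to the forced word $aaabcccd$; you instead observe that avoidance of length-$4$ parameterized squares forces the equality profile $b_i=[w[i]=w[i{+}1]]$ to satisfy $b_{i+2}=1-b_i$, hence to be $4$-periodic and determined by $(b_1,b_2)$, and then read off the length-$8$ square directly from the resulting block structure near the start of the word. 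Your version replaces the paper's two existential preliminary claims by a single global periodicity statement, at the cost of a four-way case split, and it buys something the paper's phrasing does not state explicitly: an effective bound, namely that every parameterized-square-free word has length at most $9$. One cosmetic remark: ``consecutive blocks carry pairwise distinct letters'' is just the definition of a block, and that adjacent-distinctness is all you actually use, so the phrase ``pairwise'' should not be read as claiming anything about non-adjacent blocks.
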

  \begin{proof}
    Suppose to the contrary that such an infinite word $x$ exists.
    In the proof we denote symbols of $\Alph(x)$ by $a,b,c,d$.
    Note that every suffix of $x$ has to contain two adjacent equal symbols.
    This is because $abcd$ for $a \ne b$ and $c \ne d$ is a parameterized square.
    Moreover, $x$ has to contain some three adjacent equal symbols.
    The reason is that $abbd$ for $a \ne b \ne d$ is a parameterized square.

    We can therefore assume that $x$ contains a subword $aaa$.
    To avoid a parameterized square of length 4, this subword must be followed in $x$
    by some letter $b \ne a$.
    For the same reason the next letter $c$ must satisfy $c \ne b$, and afterwards
    the subword $aaabc$ must be followed by two more occurrences of $c$.
    Finally the next letter must be $d \ne c$ to avoid a parameterized square $cccc$.
    We conclude that $x$ contains a subword $aaabcccd$ for $b \ne a$ and $d \ne c$,
    which turns out to be a parameterized square.
    This contradiction completes the proof.
  \end{proof}

  We proceed to a construction of a binary parameterized-cube-free word.
  An \emph{antisquare} is a nonempty word of the form $x\bar{x}$, where $\bar{x}$ denotes bitwise negation of $x$.
  For example, 011 100 is an antisquare.
  In the proof we will use the following characterisation of binary parameterized squares.
  
  \begin{observation}\label{obs:anti}
    For binary alphabet each parameterized square is an ordinary square or an antisquare.
  \end{observation}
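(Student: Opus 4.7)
My plan is a short case analysis based on the common size of $\Alph(u)$ and $\Alph(v)$, where $uv$ is the parameterized square in question with witness bijection $f:\Alph(u)\to\Alph(v)$. Since $f$ is a bijection, we immediately have $|\Alph(u)|=|\Alph(v)|$, and over a binary alphabet this common value is either $1$ or $2$.

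If $|\Alph(u)|=|\Alph(v)|=2$, then $f$ is a bijection from $\{0,1\}$ to itself, hence either the identity or the swap. In the identity case, $v[i]=u[i]$ for all $i$, so $uv$ is an ordinary square. In the swap case, $v[i]=\overline{u[i]}$ for all $i$, so $v=\bar u$ and $uv=u\bar u$ is an antisquare.

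If $|\Alph(u)|=|\Alph(v)|=1$, then $u=a^k$ and $v=b^k$ for some $a,b\in\{0,1\}$ and $k=|u|$. If $a=b$, then $uv=a^{2k}$ is an ordinary square; otherwise $b=\bar a$, and $uv=a^k\bar a^k$ is an antisquare. In all cases $uv$ is an ordinary square or an antisquare, as claimed.

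I expect no real obstacle here; the only subtle point is remembering that the parameterized-square definition forces $|\Alph(u)|=|\Alph(v)|$, which is what collapses the number of cases. The argument is just a matter of unfolding the definition on a two-letter alphabet.
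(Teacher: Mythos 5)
Your argument is correct; the paper states \cref{obs:anti} without proof, and your case analysis on $|\Alph(u)|=|\Alph(v)|\in\{1,2\}$ is exactly the intended unfolding of the definitions. Nothing further is needed.
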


  Let $\tau$ be the infinite Thue-Morse word.
  Recall that $\tau$ is cube-free \cite{Thue2}.
  Also recall the morphism $\lleft$ defined just before \cref{lem:left}
  (here we consider only $\lleft(0)$ and $\lleft(1)$).

  \begin{theorem}
    The word $\lleft(\tau)$ is parameterized-cube-free.
  \end{theorem}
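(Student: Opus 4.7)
Suppose, for contradiction, that $\psi(\tau)$ contains a parameterized cube $uvw$ of length greater than $3$, so $k:=|u|=|v|=|w|\ge 2$. Since $\psi(0)=10$ and $\psi(1)=11$, we have $\psi(\tau)[2i-1]=1$ and $\psi(\tau)[2i]=\tau[i]$ for every $i\ge 1$; in particular, every odd position of $\psi(\tau)$ carries $1$. By \cref{obs:anti}, both $uv$ and $vw$ are either ordinary squares or antisquares, giving four cases for $(u,v,w)$: (a)~$u^3$, (b)~$u\bar u u$, (c)~$uu\bar u$, (d)~$u\bar u \bar u$. The plan is to derive, in each case, a repetition in $\tau$ forbidden by the cube-freeness of the Thue--Morse word.

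The main tool is a parity lemma based on the odd-position-is-$1$ structure. Any binary factor $u$ of $\psi(\tau)$ starting at a position of parity $\pi$ has forced $1$'s at all subword indices of parity matching $\pi$ and copies of $\tau$-values at the remaining indices. Consequently, two occurrences of a non-constant factor $u$ must begin at positions of the same parity, whereas an occurrence of $u$ together with an occurrence of $\bar u$ must begin at positions of opposite parity, and $u$ is then forced to be alternating ($1010\cdots$ or $0101\cdots$). Moreover, $\psi(\tau)$ never contains the factor $00$, so $0^k$ cannot occur for any $k\ge 2$; and any run of $1$'s in $\psi(\tau)$ has length at most $5$, since a longer one would require three consecutive $1$'s in $\tau$, contradicting cube-freeness.

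Cases (c) and (d) collapse immediately: the required pattern of equal/complementary copies of $u$ yields incompatible parity constraints unless $u$ (respectively $\bar u$) is constant, but then the complementary factor would be $0^k$, impossible. In case (a) with $k$ even, the three copies of $u$ either coincide with the $\psi$-images of consecutive length-$k/2$ factors of $\tau$ (when the cube starts at an odd position) or are obtained from such $\psi$-images by a uniform one-position shift (when it starts at an even position); in either sub-case $u=v=w$ unfolds into a cube $(\tau[s..s+k/2-1])^3$ inside $\tau$. In case (a) with $k$ odd, the parity lemma forbids a non-constant $u$, while $u=1^k$ would produce a run of $1$'s in $\psi(\tau)$ of length $3k\ge 9$, exceeding the bound of $5$. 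Finally, in case (b) the parity lemma forces $k$ odd and $u$ alternating, so $u\bar uu$ is an alternating block of length $3k\ge 9$ inside $\psi(\tau)$; roughly half of its positions are even-absolute and must carry $0$, producing a block of at least $\lceil(3k-1)/2\rceil\ge 4$ consecutive zeros in $\tau$, again contradicting cube-freeness. The main obstacle is the careful bookkeeping in case (a), matching the subword positions of the alleged cube with the odd/even positions of $\psi(\tau)$; once that is done, the remaining cases are short applications of the parity lemma.
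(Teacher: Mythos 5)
Your proof is correct and follows essentially the same route as the paper: both rest on \cref{obs:anti}, on the observation that $\lleft(\tau)[2i-1]=1$ and $\lleft(\tau)[2i]=\tau[i]$, and on projecting the resulting repetition back to a cube (of $1$'s, of $0$'s, or of a length-$k/2$ factor) in $\tau$. The only difference is bookkeeping: you split into four cases by which of $uv$, $vw$ is a square versus an antisquare, while the paper splits on the parity of $k$ and shows the mixed cases cannot arise; the content is the same.
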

  \begin{proof}
    Suppose to the contrary that $u_1u_2u_3$ is a parameterized cube in $\lleft(\tau)$, with
    $|u_1|=|u_2|=|u_3|=k>1$.
    Note that $\lleft(\tau)$ does not contain 6 ones in a row.
    Hence, at least one of the words $u_1,u_2,u_3$ contains 0, therefore each of them contains 0.
    Moreover, every second symbol of $u_1,u_2,u_3$ is 1.

    Recall from \cref{obs:anti} that a binary parameterized square is either
    an ordinary square or an antisquare.
    If $2 \mid k$, then the ones of every second position of $u_1,u_2,u_3$ align
    and $u_1u_2$, $u_2u_3$ must be ordinary squares.
    Therefore $u_1u_2u_3$ is an ordinary cube in $\lleft(\tau)$ which induces a cube in $\tau$.

    If $2 \nmid k$, the same argument implies that both $u_1u_2$ and $u_2u_3$ are antisquares.
    Because of the ones on every second position of $u_1,u_2,u_3$
    we actually have $u_1=0101\cdots$, $u_2=1010\cdots$, $u_3=0101\cdots$ or
    $u_1=1010\cdots$, $u_2=0101\cdots$, $u_3=1010\cdots$.
    In both cases we obtain a cube $(10)^3$ in $\lleft(\tau)$ which induces  $0^3$ in $\tau$.
  \end{proof}

  \begin{example}
    The first few symbols of the Thue-Morse word $\tau$ are:
    \begin{center}
      011010011001011010\,$\cdots$
    \end{center}
    We apply the morphism $0\mapsto 10,\, 1\mapsto 11$ to obtain a parameterized-cube-free word
    starting with:
    \begin{center}
      10\,11\,11\,10\,11\,10\,10\,11\,11\,10\,10\,11\,10\,11\,11\,10\,11\,10\,$\cdots$
    \end{center}
  \end{example}

  \section{Final Remarks}
  We have presented several combinatorial results related to the maximum number of
  nonstandard squares in a word of length $n$.
  For Abelian squares we have shown that for $\sigma\ge 2$:
  \[\SQABEL(n,\sigma)=\Theta(n^2),\ \SQPABEL(n,\sigma)=\Omega(n^{3/2}),\ \mbox{and}\ \SQPABEL(n,\sigma) = O(n^{11/6}).\]

  For squares in order-preserving and parameterized setting we have shown
  that their maximum number is linear of $n$ for a constant alphabet.
  We have also presented examples of infinite words over a minimal alphabet that avoid squares in
  order-preserving setting and cubes in parameterized setting, respectively.

  The main open question that arises from our work is to provide an upper bound for $\SQPABEL(n,2)$.
  We have made a step towards this bound by showing that the maximum number of distinct Abelian squares
  in a word of length $n$ containing $m$ blocks is $O(nm)$.
  The remaining open questions are connected to $\SQPOP(n,\sigma)$ and $\SQPPARAM(n,\sigma)$
  for arbitrary $\sigma$ (not necessarily a constant).
  Based on experimental results, we state the following hypothesis:
\begin{hypothesis}
   For every $\sigma \ge 2$,
  $\SQPABEL(n,\sigma) = \Theta(n^{3/2})$, $\SQPOP(n,\sigma)=\Theta(n)$, and $\SQPPARAM(n,\sigma)=\Theta(n)$
  (with constant factors independent of $\sigma$).
\end{hypothesis}

\bibliographystyle{plainurl}
\bibliography{squares}

\begin{thebibliography}{10}

\bibitem{DBLP:journals/jcss/Baker96}
Brenda~S. Baker.
\newblock Parameterized pattern matching: Algorithms and applications.
\newblock {\em Journal of Computer and System Sciences}, 52(1):28--42, 1996.
\newblock \href {http://dx.doi.org/10.1006/jcss.1996.0003}
  {\path{doi:10.1006/jcss.1996.0003}}.

\bibitem{Karhumaki}
Jean Berstel and Juhani Karhumäki.
\newblock Combinatorics on words: a tutorial.
\newblock {\em Bulletin of the {EATCS}}, 79:178--228, 2003.
\newblock URL: \url{https://www.eatcs.org/images/bulletin/beatcs79.pdf}.

\bibitem{DBLP:conf/dlt/Blanchet-SadriJM12}
Francine Blanchet-Sadri, Yang Jiao, John~M. Machacek, J.D. Quigley, and Xufan
  Zhang.
\newblock Squares in partial words.
\newblock {\em Theoretical Computer Science}, 530:42--57, 2014.
\newblock \href {http://dx.doi.org/10.1007/978-3-642-31653-1_36}
  {\path{doi:10.1007/978-3-642-31653-1_36}}.

\bibitem{DBLP:journals/actaC/Blanchet-SadriMS09}
Francine Blanchet-Sadri, Robert Mercaş, and Geoffrey Scott.
\newblock Counting distinct squares in partial words.
\newblock {\em Acta Cybernetica}, 19(2):465--477, 2009.
\newblock URL:
  \url{http://www.inf.u-szeged.hu/actacybernetica/edb/vol19n2/BlanchetSadri_2009_ActaCybernetica.xml}.

\bibitem{DBLP:journals/tcs/CrochemoreIR09}
Maxime Crochemore, Lucian Ilie, and Wojciech Rytter.
\newblock Repetitions in strings: Algorithms and combinatorics.
\newblock {\em Theoretical Computer Science}, 410(50):5227--5235, 2009.
\newblock \href {http://dx.doi.org/10.1016/j.tcs.2009.08.024}
  {\path{doi:10.1016/j.tcs.2009.08.024}}.

\bibitem{Crochemore2015}
Maxime Crochemore, Costas~S. Iliopoulos, Tomasz Kociumaka, Marcin Kubica,
  Alessio Langiu, Solon~P. Pissis, Jakub Radoszewski, Wojciech Rytter, and
  Tomasz Waleń.
\newblock Order-preserving indexing.
\newblock {\em Theoretical Computer Science}, 2015.
\newblock \href {http://dx.doi.org/10.1016/j.tcs.2015.06.050}
  {\path{doi:10.1016/j.tcs.2015.06.050}}.

\bibitem{DBLP:conf/cpm/CrochemoreIKKRRTW12}
Maxime Crochemore, Costas~S. Iliopoulos, Tomasz Kociumaka, Marcin Kubica, Jakub
  Radoszewski, Wojciech Rytter, Wojciech Tyczyński, and Tomasz Waleń.
\newblock The maximum number of squares in a tree.
\newblock In Juha Kärkkäinen and Jens Stoye, editors, {\em Combinatorial
  Pattern Matching, {CPM} 2012}, volume 7354 of {\em {LNCS}}, pages 27--40.
  Springer, 2012.
\newblock \href {http://dx.doi.org/10.1007/978-3-642-31265-6_3}
  {\path{doi:10.1007/978-3-642-31265-6_3}}.

\bibitem{Extracting_TCS}
Maxime Crochemore, Costas~S. Iliopoulos, Marcin Kubica, Jakub Radoszewski,
  Wojciech Rytter, and Tomasz Waleń.
\newblock Extracting powers and periods in a word from its runs structure.
\newblock {\em Theoretical Computer Science}, 521:29--41, 2014.
\newblock \href {http://dx.doi.org/10.1016/j.tcs.2013.11.018}
  {\path{doi:10.1016/j.tcs.2013.11.018}}.

\bibitem{DBLP:journals/dam/DezaFT15}
Antoine Deza, Frantisek Franek, and Adrien Thierry.
\newblock How many double squares can a string contain?
\newblock {\em Skrifter utgit av Videnskapsselskapet i Kristiania. I.
  Matematisk-naturvidenskabelig klasse}, 180:52--69, 2015.
\newblock \href {http://dx.doi.org/10.1016/j.dam.2014.08.016}
  {\path{doi:10.1016/j.dam.2014.08.016}}.

\bibitem{Erdos}
Paul Erdős.
\newblock Some unsolved problems.
\newblock {\em The Michigan Mathematical Journal}, 6(3):291--300, 1957.
\newblock \href {http://dx.doi.org/10.1307/mmj/1028997963}
  {\path{doi:10.1307/mmj/1028997963}}.

\bibitem{evdokimov}
Aleksandr~Andreevich Evdokimov.
\newblock Strongly asymmetric sequences generated by a finite number of
  symbols.
\newblock {\em Doklady Akademii Nauk SSSR}, 179(6):1268--1271, 1968.
\newblock in Russian.

\bibitem{Fici}
Gabriele Fici.
\newblock private communication.

\bibitem{DBLP:conf/dlt/FiciLLLMP13}
Gabriele Fici, Alessio Langiu, Thierry Lecroq, Arnaud Lefebvre, Filippo
  Mignosi, and Élise Prieur-Gaston.
\newblock Abelian repetitions in {Sturmian} words.
\newblock In Marie-Pierre Béal and Olivier Carton, editors, {\em Developments
  in Language Theory, {DLT} 2013}, volume 7907 of {\em {LNCS}}, pages 227--238.
  Springer, 2013.
\newblock \href {http://dx.doi.org/10.1007/978-3-642-38771-5_21}
  {\path{doi:10.1007/978-3-642-38771-5_21}}.

\bibitem{fraenkel-simpson}
Aviezri~S. Fraenkel and Jamie Simpson.
\newblock How many squares can a string contain?
\newblock {\em Journal of Combinatorial Theory. Series A}, 82(1):112--120,
  1998.
\newblock \href {http://dx.doi.org/10.1006/jcta.1997.2843}
  {\path{doi:10.1006/jcta.1997.2843}}.

\bibitem{DBLP:journals/jcss/GusfieldS04}
Dan Gusfield and Jens Stoye.
\newblock Linear time algorithms for finding and representing all the tandem
  repeats in a string.
\newblock {\em Journal of Computer and System Sciences}, 69(4):525--546, 2004.
\newblock \href {http://dx.doi.org/10.1016/j.jcss.2004.03.004}
  {\path{doi:10.1016/j.jcss.2004.03.004}}.

\bibitem{DBLP:journals/tcs/HuovaKS12}
Mari Huova, Juhani Karhumäki, and Aleksi Saarela.
\newblock Problems in between words and abelian words: $k$-abelian
  avoidability.
\newblock {\em Theoretical Computer Science}, 454:172--177, 2012.
\newblock \href {http://dx.doi.org/10.1016/j.tcs.2012.03.010}
  {\path{doi:10.1016/j.tcs.2012.03.010}}.

\bibitem{DBLP:journals/jct/Ilie05}
Lucian Ilie.
\newblock A simple proof that a word of length $n$ has at most $2n$ distinct
  squares.
\newblock {\em Journal of Combinatorial Theory. Series A}, 112(1):163--164,
  2005.
\newblock \href {http://dx.doi.org/10.1016/j.jcta.2005.01.006}
  {\path{doi:10.1016/j.jcta.2005.01.006}}.

\bibitem{DBLP:journals/tcs/Ilie07}
Lucian Ilie.
\newblock A note on the number of squares in a word.
\newblock {\em Theoretical Computer Science}, 380(3):373--376, 2007.
\newblock \href {http://dx.doi.org/10.1016/j.tcs.2007.03.025}
  {\path{doi:10.1016/j.tcs.2007.03.025}}.

\bibitem{katz1999bounds}
Nets~Hawk Katz and Terence Tao.
\newblock Bounds on arithmetic projections, and applications to the {K}akeya
  conjecture.
\newblock {\em Mathematical Research Letters}, 6(6), 1999.
\newblock \href {http://dx.doi.org/10.4310/MRL.1999.v6.n6.a3}
  {\path{doi:10.4310/MRL.1999.v6.n6.a3}}.

\bibitem{DBLP:conf/icalp/Keranen92}
Veikko Keränen.
\newblock Abelian squares are avoidable on 4 letters.
\newblock In Werner Kuich, editor, {\em Automata, Languages and Programming,
  {ICALP} 1992}, volume 623 of {\em {LNCS}}, pages 41--52. Springer, 1992.
\newblock \href {http://dx.doi.org/10.1007/3-540-55719-9_62}
  {\path{doi:10.1007/3-540-55719-9_62}}.

\bibitem{DBLP:journals/tcs/KimEFHIPPT14}
Jinil Kim, Peter Eades, Rudolf Fleischer, Seok{-}Hee Hong, Costas~S.
  Iliopoulos, Kunsoo Park, Simon~J. Puglisi, and Takeshi Tokuyama.
\newblock Order-preserving matching.
\newblock {\em Theoretical Computer Science}, 525:68--79, 2014.
\newblock \href {http://dx.doi.org/10.1016/j.tcs.2013.10.006}
  {\path{doi:10.1016/j.tcs.2013.10.006}}.

\bibitem{DBLP:conf/dlt/KociumakaRRW14}
Tomasz Kociumaka, Jakub Radoszewski, Wojciech Rytter, and Tomasz Waleń.
\newblock Maximum number of distinct and nonequivalent nonstandard squares in a
  word.
\newblock In Arseny~M. Shur and Mikhail~V. Volkov, editors, {\em Developments
  in Language Theory, {DLT} 2014}, volume 8633 of {\em {LNCS}}, pages 215--226.
  Springer, 2014.
\newblock \href {http://dx.doi.org/10.1007/978-3-319-09698-8_19}
  {\path{doi:10.1007/978-3-319-09698-8_19}}.

\bibitem{MACIS2015}
Tomasz Kociumaka, Jakub Radoszewski, and Bartłomiej Wiśniewski.
\newblock Subquadratic-time algorithms for abelian stringology problems.
\newblock In {\em Mathematical Aspects of Computer and Information Sciences,
  {MACIS} 2015, to appear}, 2016.

\bibitem{DBLP:journals/ipl/KubicaKRRW13}
Marcin Kubica, Tomasz Kulczyński, Jakub Radoszewski, Wojciech Rytter, and
  Tomasz Waleń.
\newblock A linear time algorithm for consecutive permutation pattern matching.
\newblock {\em Information Processing Letters}, 113(12):430--433, 2013.
\newblock \href {http://dx.doi.org/10.1016/j.ipl.2013.03.015}
  {\path{doi:10.1016/j.ipl.2013.03.015}}.

\bibitem{DBLP:conf/cpm/Muthukrishnan95}
S.~Muthukrishnan.
\newblock New results and open problems related to non-standard stringology.
\newblock In Zvi Galil and Esko Ukkonen, editors, {\em Combinatorial Pattern
  Matching, {CPM} 1995}, volume 937 of {\em {LNCS}}, pages 298--317. Springer
  Berlin Heidelberg, 1995.
\newblock \href {http://dx.doi.org/10.1007/3-540-60044-2_50}
  {\path{doi:10.1007/3-540-60044-2_50}}.

\bibitem{Muthukrishnan:1994:NSA:195058.195457}
S.~Muthukrishnan and Krishna Palem.
\newblock Non-standard stringology: Algorithms and complexity.
\newblock In {\em 26th Annual {ACM} Symposium on Theory of Computing, {STOC}
  1994}, pages 770--779, New York, NY, USA, 1994. ACM.
\newblock \href {http://dx.doi.org/10.1145/195058.195457}
  {\path{doi:10.1145/195058.195457}}.

\bibitem{Pleasants}
Peter A.~B. Pleasants.
\newblock Non-repetitive sequences.
\newblock {\em Mathematical Proceedings of the Cambridge Philosophical
  Society}, 68(2):267--274, 1970.
\newblock \href {http://dx.doi.org/10.1017/S0305004100046077}
  {\path{doi:10.1017/S0305004100046077}}.

\bibitem{ruzsa}
Imre~Z. Ruzsa.
\newblock Sums of finite sets.
\newblock In David~V. Chudnovsky, Gregory~V. Chudnovsky, and Melvyn~B.
  Nathanson, editors, {\em Number Theory: {New York} Seminar 1991–1995},
  pages 281--293. Springer US, 1996.
\newblock \href {http://dx.doi.org/10.1007/978-1-4612-2418-1_21}
  {\path{doi:10.1007/978-1-4612-2418-1_21}}.

\bibitem{Thue}
Axel Thue.
\newblock Über unendliche {Zeichenreihen}.
\newblock {\em Skrifter udgivne af Videnskabsselskabet i Christiania. I.
  Mathematisk-naturvidenskabelig klasse}, 1:1--22, 1906.
\newblock URL: \url{http://www.biodiversitylibrary.org/item/52020}.

\bibitem{Thue2}
Axel Thue.
\newblock Über die gegenseitige {Lage} gleicher {Teile} gewisser
  {Zeichenreihen}.
\newblock {\em Skrifter utgit av Videnskapsselskapet i Kristiania. I.
  Matematisk-naturvidenskabelig klasse}, 1:1--67, 1912.
\newblock URL: \url{http://www.biodiversitylibrary.org/item/52278}.

\end{thebibliography}

\end{document}